\documentclass[conference]{IEEEtran}
\usepackage[utf8]{inputenc}
\usepackage{amsmath, amssymb, amsthm}
\usepackage{enumerate}
\usepackage{graphicx}
\usepackage{xcolor}
\usepackage{booktabs}

\usepackage{algorithm, algorithmicx, algpseudocode}
\algrenewcommand\algorithmicrequire{\textbf{Input:}}
\algrenewcommand\algorithmicensure{\textbf{Output:}}
\algnewcommand{\BlackBox}[1]{%
    \begin{flushleft}
    \hspace{-.7cm}
    \textbf{Available Functions:}
    {\raggedright #1}
    \end{flushleft}
}
\algnewcommand{\Initialize}[1]{%
    \begin{flushleft}
    \hspace{-.7cm}
    \textbf{Initialize:}
    {\raggedright #1}
    \end{flushleft}
}

\usepackage{tikz}
\usetikzlibrary{shapes.geometric, arrows}
\tikzstyle{startstop} = [rectangle, rounded corners, 
minimum width=2cm, 
minimum height=1cm,
text centered, 
draw=black, 
fill=gray!10]
\tikzstyle{arrow} = [thick,->,>=stealth]



\usepackage{hyperref}
\usepackage[capitalize,nameinlink]{cleveref}
\pdfstringdefDisableCommands{\def\eqref#1{(\ref{#1})}}

\usepackage{centernot}

\usepackage{color, soul}

\usepackage{mdframed}
\mdfdefinestyle{MyFrame}{%
    linecolor=black,
    outerlinewidth=2pt,
    roundcorner=20pt,
    innertopmargin=\baselineskip,
    innerbottommargin=\baselineskip,
    innerrightmargin=10pt,
    innerleftmargin=10pt,
    backgroundcolor=gray!20!white
}

\newcommand{\CC}{{\mathcal C}}

\newcommand{\CF}{{\mathcal F}}
\newcommand{\CH}{{\mathcal H}}

\newcommand{\CP}{{\mathcal P}}

\newcommand{\ol}{\overline{L}}
\newcommand{\fpkq}{\CF_{p}(k, q)}

\newcommand{\BC}{{\mathbb C}}

\newcommand{\BF}{{\mathbb F}}

\newcommand{\BR}{{\mathbb R}}
\newcommand{\BZ}{{\mathbb Z}}

\newcommand{\fq}{\BF_{q}}

\newcommand{\fqx}{\BF_{q}[X]}
\newcommand{\fqxy}{\BF_{q}[X, Y]}

\newcommand{\dH}{d_{{\rm H}}}
\newcommand{\ds}{d^{({\rm s})}}
\newcommand{\dc}{d_{{\rm c}}}

\newcommand{\herm}[1]
{{#1}^{\dagger}}

\DeclareMathOperator{\wt}{wt}

\DeclareMathOperator{\CCP}{CP}
\DeclareMathOperator{\RS}{RS}
\DeclareMathOperator{\GRS}{GRS}

\DeclareMathOperator{\GS}{GS}
\DeclareMathOperator{\RSdecode}{GRS-decode}
\DeclareMathOperator{\CPdecode}{CP-decode}
\DeclareMathOperator{\CPlistdecode}{CP-list-decode}

\newtheorem{theorem}{Theorem}

\theoremstyle{definition}
\newtheorem{definition}{Definition}

\theoremstyle{remark}
\newtheorem{remark}{Remark}

\newcommand{\question}[1]
{{\color{red} {\bf Question:} #1}}

\newcommand{\answer}[1]
{{\color{blue} {\bf Answer:} #1}}

\newcommand{\note}[1]
{{\color{purple} {\bf Note:} #1}}

\newcommand{\todo}[1]
{{\color{red} {\bf To Do:} #1}}

\newcommand{\hide}[1]
{{\iffalse #1 \fi}}

\IEEEoverridecommandlockouts
\title{Decoding Analog Subspace Codes: Algorithms for Character-Polynomial Codes}
\author{\IEEEauthorblockN{Samin Riasat and Hessam Mahdavifar} 
\IEEEauthorblockA{Department of Electrical and Computer Engineering, Northeastern University, Boston, MA 02115, USA \\ 
Email: \{\href{mailto:riasat.s@northeastern.edu}{riasat.s}, \href{mailto:h.mahdavifar@northeastern.edu}{h.mahdavifar}\}@northeastern.edu}
\thanks{This work was supported by NSF under Grant CCF-2415440 and the Center for Ubiquitous Connectivity (CUbiC) under the JUMP 2.0 program.}

}
\date{}

\begin{document}

\maketitle

\begin{abstract}
    We propose efficient minimum-distance decoding and list-decoding algorithms for a certain class of analog subspace codes, referred to as character-polynomial (CP) codes, recently introduced by Soleymani and the second author. In particular, a CP code without its character can be viewed as a subcode of a Reed--Solomon (RS) code, where a certain subset of the coefficients of the message polynomial is set to zeros. We then demonstrate how classical decoding methods, including list decoders, for RS codes can be leveraged for decoding CP codes. For instance, it is shown that, in almost all cases, the list decoder behaves as a unique decoder. We also present a probabilistic analysis of the improvements in list decoding of CP codes when leveraging their certain structure as subcodes of RS codes. 
\end{abstract}

\section{Introduction}

Let $W$ be an ambient vector space and let $\CP(W)$ denote the set of all of subspaces of $W$. Then a subspace code associated with this ambient vector space is defined as a subset of $\CP(W)$. In general, subspace codes become relevant in non-coherent communication settings where the communication medium only preserves the subspace spanned by the input vectors into the medium rather than the individual entries of the vectors. The notion of subspace codes and their application in non-coherent communication was first developed in the seminal work by Koetter and Kschischang in the context of randomized network coding \cite{KK}. Specifically, \cite{KK} develops a coding theory framework for subspace coding over finite fields. Recently, this framework is extended to the field of real/complex numbers in \cite{Hessam22}, i.e., when $W = \BR^M$ or $W = \BC^M$. In particular, it is shown in \cite{Hessam22} that \textit{analog} subspace codes enable reliable communication over wireless networks in a non-coherent fashion, where the communication medium is modeled as an \textit{analog operator channel}. This was done by demonstrating that the subspace error- and erasure-correction capability of an analog subspace code is characterized by the minimum subspace distance, a variation of the chordal distance introduced in \cite{Hessam22}. A new algebraic construction for one-dimensional complex subspace codes, referred to as character-polynomial codes, is introduced in \cite{Hessam22} and later extended to higher dimensions in \cite{Hessam21}.

Generally speaking, subspace codes are used for reliable communication over analog operator channels in the same way block codes are conventionally used for reliable communication over point-to-point channels. Such codes can also be viewed as codes in ``Grassmann space'' (i.e. \emph{Grassmann codes}), provided that the dimensions of all the subspace codewords are equal. Grassmann codes have also found applications in the design of communication system in the context of space-time code design for multiple-input multiple-output (MIMO) channels \cite{zheng2002communication}. However, the problem of Grassmann coding in MIMO settings is often viewed as a modulation constellation design. Furthermore, even though asymptotic limits (as $n$ grows large, and the dimension of subspace codewords is fixed) on packing in Grassmann manifolds are known \cite{barg2002bounds}, there are only a few explicit constructions known. The problem of designing efficient decoders for analog subspace codes/Grassmann codes is even less studied/understood in the literature. 

In this paper, we study CP subspace codes from a decoding perspective. In particular, we estimate the minimum distance of one-dimensional CP codes over complex numbers and provide a minimum-distance decoding algorithm. We also present a list-decoding algorithm and show that it is efficient by analyzing its complexity. Finally, we demonstrate that the list decoder almost always returns a list of size one. This is done by a probabilistic analysis of the average list size against the well-known Guruswami--Sudan list decoder for Reed--Solomon codes~\cite{Guruswami06}, demonstrating the advantages of leveraging CP code structures when utilizing off-the-shelf RS decoders. 

The rest of this paper is organized as follows. In \autoref{sec:two} we provide some preliminaries and background on subspace codes. In \autoref{sec:decoding} we present decoding algorithms for CP codes. The capabilities of the list-decoding algorithm are analyzed in \autoref{sec:prob}. Finally, the paper is concluded in \autoref{sec:five}. 

\section{Preliminaries: Operator Channel \texorpdfstring{\\}{ } and Subspace Codes}
\label{sec:two}


\subsection{Notation Convention}

\begin{itemize}
    \item 
    We view the elements of $W$ as row vectors and denote by $\CP_{k}(W)$ the set of all $k$-dimensional subspaces of $W$. 
    \item Polynomial and power series coefficients are expressed using subscripts on the symbol of the polynomial or power series, e.g. $f(X) = \sum_{k} f_{k} X^{k}$. This is done similarly for vector and codeword coordinates, e.g. $v = (v_{1}, \dots, v_{n})$. 
\end{itemize}

\subsection{Analog Operator Channel}

\begin{definition}
[{\cite[Definition~1]{Hessam22}}]
    \label{def:channel}
An \emph{(analog) operator channel} $C$ associated with $W$ is a channel where the input subspace $U \in \CP(W)$ and the output subspace $V \in \CP(W)$ are related by
    \begin{align*}
        V = \CH_{k}(U) \oplus E,
    \end{align*}
    where
    \begin{itemize}
        \item $\CH_{k}$ is a stochastic operator that
        returns a random $k$-dimensional subspace of $U$. 
        In particular, $\CH_{k}(U) \in \CP_{k}(U)$ if $\dim(U) > k$, and $\CH_{k}(U) = U$ otherwise.
        \item $E \in \CP(W)$ is an \emph{interference/error subspace}, where, without loss of generality, $E \cap U = \{ 0 \}$.
    \end{itemize}
    In transforming $U$ to $V$, we say $C$ \emph{introduces $t = \dim(E)$ insertions/errors} and $\rho = \dim(V) - k$ \emph{deletions/erasures}.
\end{definition}

\subsection{Grassmann Space and Subspace Codes}

\begin{definition}
[{\cite[\S~I]{Hessam21}}, {\cite[\S~II]{Hessam22}}]
    $\CP_{m}(\BC^{n})$ is denoted $G_{m, n}(\BC)$ and is called the \emph{Grassmann space}. The elements of $G_{m, n}(\BC)$ are called \emph{$m$-planes}. Any $m$-plane $U$ is equipped with the natural inner product $\langle u, v\rangle := u \herm{v}$ for $u, v \in U$. 
\end{definition}


\begin{definition}
    \label{def:chordal}
    Let $U, V \in G_{m, n}(\BC)$ be $m$-planes with respective orthonormal bases $\{u_{i}\}_{1 \le i \le m}$ and $\{v_{i}\}_{1 \le i \le m}$ (i.e. $u_{i}, v_{i}$ are unit row vectors). The \emph{principal angle $\theta_{i}$} for $1 \le i \le m$ is $\theta_{i} := \arccos|u_{i} \herm{v_{i}}|$. Then the \emph{chordal distance} \cite{Conway96, Hessam22} between $U$ and $V$ is 
    \begin{align}
        \label{eq:8}
        \dc(U, V) 
        := \sqrt{\sum_{i = 1}^{m} \sin^{2} \theta_{i}}. 
    \end{align}
\end{definition}

We remark that this is not the only possible way to define distances between subspaces. However, as was shown in \cite{Hessam22}, a variation of this notion can perfectly capture the error- and erasure-correction capabilities of subspace codes for the analog operator channel, as defined below.

\hide{
\begin{definition}
    \label{def:quasi}
    Let $\sigma > 1$. A \emph{$\sigma$-quasimetric} on a set $M$ is a function $d: M \times M \to \BR^{+}$ that satisfies the conditions of a metric except that the triangle inequality is relaxed to the \emph{$\sigma$-relaxed triangle inequality}:
    \begin{align*}
        d(x, z) < \sigma[d(x, y) + d(y, z)] 
    \end{align*}
\end{definition}
}

\begin{definition}[{\cite[Definition~2]{Hessam22}}]
    \label{def:dist2}
    The \emph{subspace distance} of two subspaces $U, V \in \CP(W)$ 
    is defined as 
    \begin{align}
        \label{eq:12}
        \ds(U, V) 
        &:= 2 d_c(U,V)^2. 
    \end{align}
    \hide{
    Observe that this is a \emph{$2$-quasimetric}. 
    }
\end{definition}

\hide{
\begin{remark}
    If $U, V \in G_{1, n}(\BC)$, then 
    \begin{align*}
        \dc(U, V) 
        &= |\sin^{2}(\arccos|u_{1} v_{1}^{\dagger}|)| 
        = 1 - (u_{1} v_{1}^{\dagger})^{2} 
    \end{align*}
    Hence, 
    \begin{align*}
        d(U, V) 
        &= 2 (1 - (u_{1} v_{1}^{\dagger})^{2})^{2} 
    \end{align*}
    \todo{
        How does this play with the Hamming metric used for $\CCP$ later on? Does this generalise the Hamming metric?
    }
\end{remark}
}

\begin{definition}
[{\cite[Definition~3]{Hessam22}}]
    A \emph{subspace code} $\CC$ is a collection of subspaces of $W$, 
    i.e. $\CC \subseteq \CP(W)$.
    The \emph{minimum distance} of $\CC$ is 
    \begin{align*}
        \ds_{\min}(\CC) 
        &:= \min_{\substack{U, V \in \CC \\ U \neq V}} \ds(U, V). 
    \end{align*}
\end{definition}


\subsection{Minimum-Distance Decoding}

As with conventional block codes, one can associate a minimum-distance decoder to a subspace code $\CC$ for communication over an analog operator channel in order to recover from subspace errors and erasures. Such a decoder returns the nearest codeword $V \in \CC$ given $U \in \CP(W)$ as its input, i.e. for any $V' \in \CC$, $\ds(U, V) \le \ds(U, V')$.

\begin{theorem}[{\cite[Theorem~1]{Hessam22}}]
    \label{thm:1}
    Consider a subspace code $\CC$ used for communication over an analog operator channel as in \autoref{def:channel}. Then the minimum-distance decoder successfully recovers the transmitted codeword $U \in \CC$ from the received subspace $V$ if 
    \begin{align*}
        2(\rho + t) < \ds_{\min}(\CC). 
    \end{align*}
\end{theorem}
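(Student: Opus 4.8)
The plan is to adapt the classical minimum‑distance decoding argument of Koetter and Kschischang for finite‑field subspace codes \cite{KK} to the analog operator channel. Two ingredients are needed: (i) an upper bound on $\ds(U,V)$ between the transmitted codeword and the received subspace in terms of the channel parameters $\rho$ and $t$, and (ii) a triangle‑type inequality that lets one rule out the competing codewords.

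First I would fix the transmitted codeword $U \in \CC$ and write the received subspace as $V = \CH_{k}(U) \oplus E$ with $t = \dim E$ and $\rho = \dim V - k$, as in \autoref{def:channel}. The core step is the geometric estimate
\[
  \ds(U, V) \;\le\; \tfrac{1}{2}(\rho + t),
\]
equivalently $\dc(U,V)^{2} \le \tfrac14(\rho + t)$. I would prove this by working through the principal angles between $U$ and $V$. Since $\CH_{k}(U) \subseteq U$ and $E \cap U = \{0\}$, one checks that $U \cap V = \CH_{k}(U)$, so the principal angles associated with this common subspace all vanish, while the remaining angles account precisely for the dimensions that $\CH_{k}$ deletes from $U$ and the dimensions that $E$ inserts. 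Writing $\ds(U,V) = \fnorm{P_{U} - P_{V}}^{2}$ in terms of the orthogonal projections onto $U$ and $V$, expanding $\tr(P_{U}P_{V})$ along the principal angles, and counting the contribution of each deleted/inserted dimension yields the bound.

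For the second step I would use that, although $\ds = 2\dc^{2}$ is only a $2$‑quasimetric, the chordal distance $\dc$ is a genuine metric, whence $\ds(U, U') \le 2\bigl(\ds(U, V) + \ds(V, U')\bigr)$ for any $U' \in \CC$. Rearranging gives $\ds(V, U') \ge \tfrac12 \ds(U, U') - \ds(U, V) \ge \tfrac12 \ds_{\min}(\CC) - \ds(U, V)$ whenever $U' \neq U$. Combining this with the hypothesis $\ds_{\min}(\CC) > 2(\rho + t)$ and the estimate of the first step,
\[
  \ds(V, U') \;>\; (\rho + t) - \tfrac12(\rho + t) \;=\; \tfrac12(\rho + t) \;\ge\; \ds(U, V) \;=\; \ds(V, U),
\]
so $U$ is the strictly unique nearest codeword to $V$ and the minimum‑distance decoder outputs it.

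I expect the main obstacle to be the first step: making the analog principal‑angle bookkeeping rigorous, in particular fixing the convention for the chordal distance between subspaces of differing dimensions and checking that the per‑dimension cost of a deletion or an insertion is exactly what is required for the constant $\tfrac12$ to match the threshold $2(\rho + t) < \ds_{\min}(\CC)$. In the finite‑field setting this reduces to the clean identity $d_{S}(U,V) = \rho + t$ for the combinatorial subspace distance; here the geometric refinement developed in \cite{Hessam22} is what is doing the work. The second step, by contrast, is routine, provided one is careful to invoke the triangle inequality for the genuine metric $\dc$ rather than for $\ds$ directly.
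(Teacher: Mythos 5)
Your two-step architecture (bound $\ds(U,V)$ by the channel parameters, then eliminate competing codewords by a triangle-type inequality) is the right one, and is essentially the one used in \cite{Hessam22} — the present paper does not reprove the theorem, it only cites it. However, both of your steps contain a genuine error, and the two errors cancel in your arithmetic. First, the central estimate $\ds(U,V)\le\tfrac12(\rho+t)$ is false; the correct, tight bound is $\ds(U,V)\le\rho+t$. Writing $\ds(U,V)=\dim U+\dim V-2\tr(P_UP_V)=\fnorm{P_U-P_V}^2$, which is how \cite{Hessam22} extends $2\dc^2$ to subspaces of unequal dimension (the naive $\sum_i\sin^2\theta_i$ over $\min(\dim U,\dim V)$ principal angles would assign distance $0$ to nested subspaces and break the whole framework), each deleted and each inserted dimension costs $1$, not $\tfrac12$: for $\dim U=2$, $k=1$ and $E$ a line orthogonal to $U$ one gets $\ds(U,V)=2=\rho+t$. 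Second, with the correct bound, your $2$-relaxed triangle inequality only yields $\ds(V,U')\ge\tfrac12\ds_{\min}(\CC)-(\rho+t)>0$, which is vacuous; that route proves the theorem only under the stronger hypothesis $4(\rho+t)<\ds_{\min}(\CC)$. The factor-of-$2$ loss in $\ds(X,Z)\le2\bigl(\ds(X,Y)+\ds(Y,Z)\bigr)$ is real and cannot be recovered afterwards.

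The repair is to observe that, although $\ds$ is not a metric, the \emph{exact} inequality $\ds(A,C)\le\ds(A,B)+\ds(B,C)$ does hold for every $C$ whenever $A$ and $B$ are nested: it is equivalent to $\tr\bigl((P_A-P_B)(P_C-P_B)\bigr)\ge0$, which follows from $(P_A-P_B)P_B=0$ and $P_A-P_B\succeq0$ when $B\subseteq A$, and from $0\preceq P_C\preceq I$ when $A\subseteq B$. The channel hands you exactly the chain along which this applies: $V'':=\CH_k(U)=U\cap V$ is contained in both $U$ and $V$, with $\ds(U,V'')=\rho$ and $\ds(V'',V)=t$. Hence for any codeword $U'\ne U$,
\begin{align*}
\ds(U,U')\;\le\;\ds(U,V'')+\ds(V'',U')\;\le\;\rho+t+\ds(V,U'),
\end{align*}
so $\ds(V,U')\ge\ds_{\min}(\CC)-(\rho+t)>\rho+t\ge\ds(U,V)$ under the stated hypothesis, and the minimum-distance decoder returns $U$. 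This is precisely the Koetter--Kschischang bookkeeping you alluded to; the idea you were missing is that the metric defect of $\ds$ vanishes along $U\supseteq V''\subseteq V$, so no relaxed triangle inequality (and no compensating over-strong bound on $\ds(U,V)$) is needed.
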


\autoref{thm:1} implies that erasures and errors have equal costs in the subspace domain as far as the minimum-distance decoder is concerned. In other words, the minimum-distance decoder for a subspace code $\CC$ can correct up to $\left\lfloor \frac{\ds_{\min}(\CC) - 1}{2} \right\rfloor$ errors and erasures.

\subsection{Construction of Analog Subspace Codes}


We recall below the novel approach \cite{Hessam22} based on character sums resulting in explicit constructions with better rate-minimum distance trade-off compared to the other known constructions for a wide range of parameters. 
We also introduce some new notation that simplifies our presentation and arguments compared to earlier work. 
Henceforth, $\fq$ denotes a finite field of characteristic $p$ with $q$ elements. 

\begin{definition}
    \label{def:fpkq}
    For $k < q$, the \emph{message space} 
    \begin{align*}
        \CF(k, q) 
        &:= \{f \in \fqx: \deg(f) \le k\} 
    \end{align*}
    is the set of all polynomials of degree at most $k$ over $\fq$. The elements of $\CF(k, q)$ are called \emph{message polynomials}, whose coefficients represent message symbols. 
    For 
    $f \in \CF(k, q)$, we denote 
        $f^{(p)}(X) 
        := \sum_{0 \le j \le k / p} 
        f_{j p} X^{j}$
    and define the additional message spaces 
    \begin{align*} 
        \fpkq 
        &:= \{f(X) - f^{(p)}(X^{p}): f \in \CF(k, q)\} \\ 
        \fpkq' 
        &:= \{f(X) / X: f \in \fpkq\}.
    \end{align*}
    In other words, $\fpkq$ is the set of all message polynomials  $f \in \CF(k, q)$ with $f_{j p} = 0$ for all integers $j \ge 0$. 
\end{definition}

\begin{definition}[CP Code {\cite[Definition~6]{Hessam21}}]
    \label{def:cp}
    Fix $k \le n = q - 1$, a non-trivial character $\chi$ of $\fq$, and any ordering $\alpha_{1}, \dots, \alpha_{n}$ of $\fq^{\times} := \fq \setminus \{0\}$
    . Then the encoding of $f \in \fpkq$ in $\CCP := \CCP(\fpkq, \chi) \subseteq G_{1, n}(\BC)$ is given by
    \begin{align*}
        \CCP(f) 
        &:= (\chi(f(\alpha_{1})), \dots, \chi(f(\alpha_{n}))). 
    \end{align*}
\end{definition}

We remark that all choices for the non-trivial character $\chi$ result in the same codebook~\cite{Hessam22}. 
Hence, we will sometimes omit one or more of the parameters $k, p, q, \chi$ when they are clear from the context. Note 
also that 
\begin{align}
    \label{eq:CPsize}
    |\CCP| = |\fpkq| 
    = |\fpkq'| = q^{k - \lfloor k / p \rfloor},
\end{align}
which can be obtained from \autoref{def:fpkq}. (See also \cite[Theorem~9]{Hessam22}.) 

\section{Decoding CP Codes}
\label{sec:decoding}

\hide{
\textcolor{red}{One can observe the image of a CP code in $\fq$ can be regarded as a subcode of a Reed-Solomon code, where a certain subset of the coefficients of the message polynomial is set to zero.} 
}
One can observe that $\CCP(\fpkq, \chi)$ without $\chi$ applied to the codeword coordinates can be regarded as a subcode of a Reed--Solomon (RS) code, where a certain subset of the coefficients of the message polynomial is set to zeros. 
Hence, RS decoders may be used as one component in decoding CP codes. We recall the following definitions of the well-known RS codes that will be referred to throughout this work. 

\begin{definition}[RS Code]
    \label{def:rs}
    Fix $k \le n \le q$ and distinct $\alpha_{1}, \dots, \alpha_{n} \in \fq$. Then the encoding of $f \in \CF(k - 1, q)$ 
    in $\RS := \RS(\CF(k - 1, q))$ is given by
    \begin{align}
        \label{eq:rs}
        \RS(f) 
        &:= (f(\alpha_{1}), \dots, f(\alpha_{n})). 
    \end{align}
\end{definition}

\begin{definition}[Generalized RS Code]
    \label{def:grs}
    Fix $k \le n \le q$, distinct $\alpha_{1}, \dots, \alpha_{n} \in \fq$ and not necessarily distinct 
    $v_{1}, \dots, v_{n} \in \fq^{\times}$. Then the encoding of $f \in \CF(k - 1, q)$ in $\GRS := \GRS(\CF(k - 1, q))$ is given by
    \begin{align}
        \label{eq:grs}
        \GRS(f) 
        &:= (v_{1} f(\alpha_{1}), \dots, v_{n} f(\alpha_{n})).
    \end{align}
\end{definition}

For the rest of this section, let $n = q - 1$, $d = n - k + 1 \ge 1$ and $v_{i} = \alpha_{i}$ for $i \in \{1, \dots, n\}$. 
Note that $n = q - 1$ is a common special case for $\RS$ and $\GRS$
,
and it is well known that $\RS$ in \autoref{def:rs} and $\GRS$ in \autoref{def:grs} are $[n, k, d]_{q}$ codes. 

\hide{
\begin{definition}
    \label{def:subcode}
    Given $\RS$ and $\GRS$ as above, for any $S \subseteq \CF(k, q)$, define 
    \begin{align*}
        \RS(S) 
        &:= \{\RS(f): f \in S\} \\ 
        \GRS(S) 
        &:= \{\GRS(f): f \in S\}
    \end{align*}
\end{definition}
}

Note that as in \autoref{def:cp}, there was a slight abuse of notation by identifying the $\RS$ and $\GRS$ codes above with their encoding maps \eqref{eq:rs} and \eqref{eq:grs}. By doing so,
$\CCP(\fpkq, \chi)$ without $\chi$ applied to the codeword coordinates is simply expressed as $\RS(\fpkq) = \GRS(\fpkq')$ (i.e. see \eqref{eq:cp-grs} below), which is 
a linear subcode of $\GRS 
\subseteq \RS(\CF(k, q))$. 

Throughout the following, let $\CC = \GRS(\fpkq')$. 


\subsection{Unique Decoding}

\subsubsection{Minimum Distance 
of CP Codes}

Since the minimum distance of $\RS$ and $\GRS$ as defined above is $d$,
a $\RS$ or $\GRS$ decoder can correct $< d / 2$ block errors. 
Consequently, the decoder applied to 
$\CC$ 
can also correct $< d / 2$ block errors. 
It is therefore natural to ask whether polynomials in $\fpkq'$ having $1 + \lfloor k / p \rfloor$ guaranteed zero coefficients can allow for correcting more errors. 

Computational evidence suggests that 
$\CC$
\begin{itemize}
    \item is MDS for $p = 2$ or $q = p$, 
    \item has minimum distance $
    d + \lfloor k / p \rfloor$ or 
    $d - 1 + \lfloor k / p \rfloor$
    . 
\end{itemize}
We try to prove some of these observations below. 

\begin{theorem}
    \label{thm:GRSdmin}
    The minimum distance of $\CC$ 
    satisfies 
    \begin{align*}
        d &\le 
        d_{\min}(\CC) 
        \le d + \left \lfloor \frac{k}{p} \right \rfloor. 
    \end{align*}
    with equality on the left if $k < p$ or $k - 1$ divides $q - 1$, and on the right iff $\CC$ is MDS. 
\end{theorem}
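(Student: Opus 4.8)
The plan is to phrase everything through the identification $\CC = \RS(\fpkq) = \GRS(\fpkq')$, which holds because $v_i = \alpha_i$ gives $\GRS(f/X) = (\alpha_i\,(f/X)(\alpha_i))_i = (f(\alpha_i))_i = \RS(f)$, together with the elementary fact that $\wt(\RS(f))$ equals $n$ minus the number of roots of $f$ lying in $\fq^{\times}$. Two remarks are needed. First, $f \mapsto \RS(f)$ is injective on $\fpkq$: a nonzero $f$ with $\deg f \le k \le n = q-1$ cannot vanish on all of $\fq^{\times}$ unless $k = q-1$ and $f$ is a scalar multiple of $X^{q-1}-1$, which is impossible since $f_0 = 0$ for every $f \in \fpkq$; hence $\dim \CC = \dim \fpkq = k - \lfloor k/p \rfloor$ by \eqref{eq:CPsize}. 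Second, $f_0 = 0$ for all $f \in \fpkq$ forces $\fpkq' \subseteq \CF(k-1,q)$, so $\CC = \GRS(\fpkq')$ is a linear subcode of the $[n,k,d]_q$ MDS code $\GRS(\CF(k-1,q))$.

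\textbf{The two inequalities.}
The left inequality $d_{\min}(\CC) \ge d$ is then immediate from the subcode remark. The right inequality is the Singleton bound applied to $\CC$: $d_{\min}(\CC) \le n - \dim\CC + 1 = (q-1) - (k-\lfloor k/p\rfloor) + 1 = d + \lfloor k/p\rfloor$. (One can instead argue constructively: $\fpkq$ has codimension $\lfloor k/p\rfloor + 1$ in $\CF(k,q)$, while the space of degree-$\le k$ polynomials vanishing on a fixed set of $k-1-\lfloor k/p\rfloor$ points of $\fq^{\times}$ has dimension $\lfloor k/p\rfloor + 2$, so a dimension count forces their intersection to contain a nonzero $f$, and $\RS(f)$ then has weight $\le d + \lfloor k/p\rfloor$.) The right-hand equality $d_{\min}(\CC) = d + \lfloor k/p\rfloor \iff \CC$ is MDS is now automatic, since $d + \lfloor k/p\rfloor = n - \dim\CC + 1$ is precisely the Singleton value.

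\textbf{The left-hand equality.}
Here it suffices to exhibit one weight-$d$ codeword, i.e. a nonzero $f \in \fpkq$ with at least $k-1$ roots in $\fq^{\times}$ (the left inequality already forbids more). If $k < p$ then $\lfloor k/p\rfloor = 0$, so $\fpkq = X\cdot\CF(k-1,q)$, $\fpkq' = \CF(k-1,q)$, and $\CC = \GRS(\CF(k-1,q))$ is itself the full MDS code of minimum distance $d$. If $k-1 \mid q-1$ (and $p \nmid k$; see below), take $f(X) = X^k - X = X(X^{k-1}-1)$: since $k-1 \mid q-1$ and therefore $p \nmid k-1$, the factor $X^{k-1}-1$ splits into $k-1$ distinct linear factors over $\fq$ (the order-$(k-1)$ subgroup of $\fq^{\times}$), so $f$ has exactly $k-1$ roots in $\fq^{\times}$; and $f \in \fpkq$ because its only nonzero coefficients sit in degrees $1$ and $k$, neither divisible by $p$. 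Then $\RS(f) \in \CC$ has weight $n-(k-1) = d$.

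\textbf{The main obstacle.}
The delicate point is the second equality case, and I would flag that the hypothesis ``$k-1 \mid q-1$'' should be read with the extra condition $p \nmid k$: when $p \mid k$, every $f \in \fpkq$ has $\deg f \le k-1$, so a codeword of weight $d$ would need $f$ of degree exactly $k-1$ with all $k-1$ roots in $\fq^{\times}$, hence $f_0 \ne 0$, contradicting $f \in \fpkq$; in that sub-case one gets $d_{\min}(\CC) \ge d+1$ (consistent with the ``MDS for $p=2$'' evidence quoted before the theorem). Thus the clause is really ``$k < p$, or ($k-1 \mid q-1$ and $p \nmid k$)'', and the $X^k - X$ construction settles exactly this refined claim. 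Reconciling the ``coefficients at multiples of $p$ vanish'' constraint defining $\fpkq$ with the demand for many roots in $\fq^{\times}$ is the crux of the whole argument, and $X^k - X$ is essentially the only clean monomial difference that threads this needle.
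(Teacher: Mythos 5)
Your proof is correct and follows essentially the same route as the paper: the left inequality from $\CC \subseteq \GRS(\CF(k-1,q))$, the right inequality from the Singleton bound (equivalent to the MDS characterization), and the witness $X^{k-1}-1 \in \fpkq'$ (equivalently $X^{k}-X \in \fpkq$) for the case $k-1 \mid q-1$. Your flag about the sub-case $p \mid k$ is a genuine catch rather than a quibble: the paper's proof asserts $X^{k-1}-1 \in \fpkq'$ without checking that the coefficient of $X^{k}$ in $X^{k}-X$ sits at a degree not divisible by $p$, and your observation that a weight-$d$ codeword would force $f$ of degree exactly $k-1$ to split completely over $\fq^{\times}$ and hence satisfy $f_{0} \neq 0$ shows that the left equality in fact fails when $p \mid k$ (e.g.\ $q=16$, $k=6$, where $5 \mid 15$ but $2 \mid 6$). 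The sufficient condition should therefore read ``$k<p$, or ($k-1 \mid q-1$ and $p \nmid k$),'' exactly as you propose.
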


\begin{proof}
    The right half of the inequality follows directly from the Singleton bound, so we prove the other half.
    
    \hide{
    Observe that $\CC$ is a linear code, and 
    \begin{align*}
        \CCP 
        = \{(\chi(c_{1}), \dots, \chi(c_{n})): c \in \CC\}
    \end{align*}
    so that $d_{\min}(\CCP) = d_{\min}(\CC)$.
    }Since $\CC \subseteq \GRS$, 
    \hide{
    \begin{align*}
        \CC 
        := \RS(\fpkq) 
        &= \GRS(\fpkq') 
        \subseteq \GRS(\CF(k - 1, q)) 
    \end{align*}
    }
    \begin{align}
        \label{eq:grs-dmin}
        d_{\min}(\CC) 
        \ge d_{\min}(\GRS) 
        = d. 
    \end{align}
    In the other direction, let $f(X) = X^{k - 1} - 1 \in \fpkq'$. 
    When $k - 1$ divides $q - 1$, the $k - 1$ distinct roots of $f(X)$ all lie in $\fq^{\times}$. 
    Hence, $k - 1$ of the coordinate values in $\CC(f)$ are $0$, 
    which implies 
    $d_{\min}(\CC) \le \wt(\CC(f)) = d$.
    
    Finally, if $k < p$, then $\CC = \GRS$, so \eqref{eq:grs-dmin} is an equality. 
\end{proof}

\hide{
\subsubsection*{Computation Results}
\label{sec:dmin}

We computed the minimum distance of $\CC = \GRS(\fpkq')$ for $q = 16$ and $1 \le k < q$. Cases where $d_{\min}(\CC) > d$ are shown in \autoref{tab:dmin}. The values of $d_{\min}(\CC)$ achieving equality in the Singleton bound are highlighted in red. 

\begin{table}[!htbp]
    \centering
    \caption{Minimum Distance of $\CC = \GRS(\fpkq')$}
    \begin{tabular}{|c|c|c|}
        \hline 
        $(q, k)$ & $d_{\min}(\CC)$ & $d$ \\
        \hline 
        $(16, 3)$ & ${\color{red} 14}$ & $13$ \\ 
        $(16, 5)$ & ${\color{red} 13}$ & $11$ \\ 
        $(16, 7)$ & ${\color{red} 12}$ & $9$ \\ 
        $(16, 9)$ & ${\color{red} 11}$ & $7$ \\ 
        $(16, 10)$ & $8$ & $6$ \\ 
        $(16, 11)$ & ${\color{red} 10}$ & $5$ \\ 
        $(16, 12)$ & $8$ & $4$ \\ 
        $(16, 13)$ & $9$ & $3$ \\ 
        $(16, 14)$ & $8$ & $2$ \\ 
        $(16, 15)$ & $8$ & $1$ \\ 
        \hline 
    \end{tabular}
    \label{tab:dmin}
\end{table}
}


Note that one-dimensional CP subspace codes can be regarded as classical block codes by ensuring that all codewords have equal norms. Hence, their minimum Hamming distance $d_{\min}(\CCP)$ can be also studied when regarded as block codes, as in the next theorem. 

\begin{theorem}
    \label{thm:CPdmin} 
    The minimum distance of $\CCP$ 
    satisfies 
    \begin{align}
        \label{eq:CPmindist1}
        d_{\min}(\CCP) 
        &\le d 
        + \left \lfloor \frac{k}{p} \right \rfloor 
    \end{align}
    with equality iff $\CCP$ is MDS. Furthermore, if $q = p$, then
    \begin{align} 
        \label{eq:CPmindist2}
        d_{\min}(\CCP) 
        &\ge d. 
    \end{align}
    In particular, $\CCP$ is MDS when $k < p$ or $q = p$. 
\end{theorem}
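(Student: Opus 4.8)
The plan is to transport the distance computation from $\CC = \GRS(\fpkq')$ through the coordinate-wise character map. First I would record the exact link between $\CCP$ and $\CC$: for $g \in \fpkq$, write $g = Xh$ with $h \in \fpkq'$; then $\CC(h)_i = \alpha_i h(\alpha_i) = g(\alpha_i)$, so with $c := \CC(h)$ we have $\CCP(g) = (\chi(c_1), \dots, \chi(c_n))$. Hence, viewing $\CCP$ as a block code through its canonical norm-$\sqrt{n}$ representatives, $\CCP$ is exactly the image of $\CC$ under the map $\Phi$ that replaces each coordinate $x$ by $\chi(x)$, and by \eqref{eq:CPsize} the correspondence $h \leftrightarrow \CC(h) \leftrightarrow \CCP(Xh)$ is a bijection of codebooks. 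Applying a fixed function coordinate-wise can merge coordinates but never separate them, so $\dH(\Phi(c), \Phi(c')) \le \dH(c, c')$ for all $c, c'$, whence $d_{\min}(\CCP) \le d_{\min}(\CC)$. Combining with $d_{\min}(\CC) \le d + \lfloor k/p\rfloor$ from \autoref{thm:GRSdmin} yields \eqref{eq:CPmindist1}; since $d + \lfloor k/p\rfloor$ is the Singleton value for a length-$n$ code with $|\CCP| = q^{k - \lfloor k/p\rfloor}$ codewords, equality there is exactly the statement that $\CCP$ is MDS.

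Next I would settle the lower bound \eqref{eq:CPmindist2} and the $q = p$ part of the last claim together. If $q = p$, then $\chi$ is injective on $\fq = \fp$ (trivial kernel), so $\Phi$ is a Hamming isometry and $d_{\min}(\CCP) = d_{\min}(\CC)$. Also, $q = p$ forces $k \le q - 1 < p$, so by \autoref{def:fpkq} the only power-of-$p$ constraint on $\fpkq$ is $f_0 = 0$; hence $\fpkq = X \cdot \CF(k - 1, q)$, $\fpkq' = \CF(k - 1, q)$, and $\CC = \GRS(\CF(k-1,q))$ is the full $[n, k, d]_q$ MDS code. Therefore $d_{\min}(\CCP) = d_{\min}(\CC) = d$, which is \eqref{eq:CPmindist2}, and, since $\lfloor k/p\rfloor = 0$, it also shows $\CCP$ is MDS when $q = p$.

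For the remaining assertion — $\CCP$ is MDS when $k < p$ — the identical reduction gives $\fpkq' = \CF(k-1,q)$, so $\CC = \GRS$ is MDS with $d_{\min}(\CC) = d = d + \lfloor k/p\rfloor$, and what is left is to promote $d_{\min}(\CCP) \le d_{\min}(\CC)$ to an equality in this range. I expect this to be the main obstacle. When $q > p$ the map $\Phi$ is genuinely many-to-one, with $|\ker\chi| = q/p$, so two codewords of $\CC$ at distance $d$ could conceivably collapse to codewords of $\CCP$ that agree on some of those $d$ positions. Excluding this is a statement about pairwise differences: the difference of two codewords of $\CC$ has the form $(\alpha_i h(\alpha_i))_i$ for a nonzero $h \in \fpkq'$, and one needs $\alpha_i h(\alpha_i) \notin \ker\chi$ for all but at most $k - 1$ indices $i$ — equivalently, that the polynomial $Xh(X)$ of degree $\le k$ meets the additive subgroup $\ker\chi$ at no more than $k - 1$ points of $\fq^\times$. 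Proving this bound (and reconciling it with the stated scope $k < p$) is where the real work lies; the case $q = p$ handled above is exactly the degenerate instance $\ker\chi = \{0\}$, where no such collapse is possible.
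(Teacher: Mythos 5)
Your argument is, for everything it actually proves, the same as the paper's: identify $\CCP$ with the image of $\CC=\GRS(\fpkq')$ under the coordinate-wise character map (the paper's \eqref{eq:cp-grs}), observe that applying a fixed map coordinate-wise can only shrink Hamming distance so that $d_{\min}(\CCP)\le d_{\min}(\CC)$, deduce \eqref{eq:CPmindist1} from the Singleton bound via \eqref{eq:CPsize}, and obtain \eqref{eq:CPmindist2} together with the MDS claim for $q=p$ from the injectivity of $\chi$ on $\BF_p$ combined with \autoref{thm:GRSdmin}. All of that is correct and matches the paper's proof step for step.

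The one step you decline to complete --- upgrading $d_{\min}(\CCP)\le d_{\min}(\CC)$ to an equality when $k<p$ but $q>p$ --- is exactly the step the paper does not argue either: its proof simply asserts ``with equality if $k<p$ or $q=p$,'' and the stated justification (that $\chi(\fq)$ is the group of $p$-th roots of unity) only supports the case $q=p$. Your skepticism is warranted, and the obstruction you describe is real rather than a gap in your own reasoning. Take $q=4$, $p=2$, $k=1$, so $n=3$, $d=3$, $\lfloor k/p\rfloor=0$, and $\fpkq=\{f_1X: f_1\in\BF_4\}$. Here $\ker\chi$ is an additive subgroup of size $q/p=2$, say $\{0,1\}$, and for $f_1\neq f_1'$ the difference $(f_1-f_1')\alpha$ lands in $\ker\chi\setminus\{0\}$ for exactly one $\alpha\in\fq^{\times}$; hence every pair of distinct codewords of $\CCP$ agrees in exactly one coordinate and $d_{\min}(\CCP)=2<3=d+\lfloor k/p\rfloor$, so equality in \eqref{eq:CPmindist1} fails even though $k<p$ and $\CC=\GRS$ is MDS. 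In other words, the ``$k<p$'' clause of the final sentence is only safe under the standing assumption $q=p$ that the paper adopts immediately after this theorem (where $k\le q-1<p$ holds automatically); there is no missing counting argument about roots of $Xh(X)$ in cosets of $\ker\chi$ for you to supply, because the unrestricted claim is not true.
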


\begin{proof}
    Observe that 
    \begin{align}
        \CCP(\fpkq, \chi) = 
        \{(\chi(c_{1}), \dots, \chi(c_{n})): c \in \CC\}.
        \label{eq:cp-grs}
    \end{align}
    Since $\chi(\fq)$ is the multiplicative group of the $p$-th roots of unity, it follows that
    $d_{\min}(\CCP) \le d_{\min}(\CC)$, with equality if $k < p$ or $q = p$. 
    Now \eqref{eq:CPmindist1} follows from the Singleton bound (via \eqref{eq:CPsize}), and \eqref{eq:CPmindist2} follows from \autoref{thm:GRSdmin}. 
\end{proof}

Note that \autoref{thm:GRSdmin} gives only sufficient conditions for equality in \eqref{eq:grs-dmin}. 
In other words, for other values of $k \ge p$ where $k-1$ does not divide $q - 1$, one may still find examples of $f \in \fpkq'$ with all roots in $\fq^{\times}$, which would also imply equality in \eqref{eq:grs-dmin}. 


\hide{
We now estimate the covering radius $\rho(\CC)$ of $\CC$. 

\begin{theorem}
    \label{thm:grs-rho}
    The covering radius of $\CC$ satisfies 
    \begin{align*}
        d - 1 
        &\le \rho(\CC) 
        \le d - 1 + \left \lfloor \frac{k}{p} \right \rfloor. 
    \end{align*}
\end{theorem}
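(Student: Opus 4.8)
The plan is to prove the two inequalities independently, using only that $\CC=\GRS(\fpkq')$ is a linear code over $\fq$ of length $n$ and dimension $\kappa:=k-\lfloor k/p\rfloor$. I would justify this dimension first: $\GRS$ is injective on $\CF(k-1,q)$, since a nonzero polynomial of degree $\le k-1<n$ cannot vanish at all $n$ distinct points $\alpha_1,\dots,\alpha_n$ and the multipliers $v_i=\alpha_i$ are nonzero; and $\dim_{\fq}\fpkq'=k-\lfloor k/p\rfloor$ directly from \autoref{def:fpkq} (equivalently from \eqref{eq:CPsize}).

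For the upper bound I would invoke the redundancy (``systematic encoding'') bound for the covering radius: since $\dim\CC=\kappa$, one may fix an information set $I$ of $\kappa$ coordinates; then for any $v\in\fqn$ there is a unique $c\in\CC$ agreeing with $v$ on $I$, so $\dH(v,c)\le n-\kappa=(n-k)+\lfloor k/p\rfloor=d-1+\lfloor k/p\rfloor$. Maximizing over $v$ gives $\rho(\CC)\le d-1+\lfloor k/p\rfloor$.

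For the lower bound I would exhibit a single received word that is far from all of $\CC$. Take $w$ to be the GRS-type evaluation vector of the polynomial $X^k$ (whose degree is one too large), i.e.\ $w:=(\alpha_1^{\,k+1},\dots,\alpha_n^{\,k+1})\in\fqn$. For $g\in\fpkq'$, the codeword $\GRS(g)=(\alpha_1 g(\alpha_1),\dots,\alpha_n g(\alpha_n))$ matches $w$ in coordinate $i$ iff $\alpha_i^{\,k+1}=\alpha_i g(\alpha_i)$, which --- because $\alpha_i\in\fq^{\times}$ --- is equivalent to $\alpha_i$ being a root of $X^k-g(X)$. As $\deg g\le k-1$, this polynomial has degree exactly $k$ and hence at most $k$ roots in $\fq$, so $\GRS(g)$ agrees with $w$ in at most $k$ positions and $\dH(w,\GRS(g))\ge n-k=d-1$. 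Since this holds for every $g\in\fpkq'$, we conclude $\rho(\CC)\ge d-1$. Finally, when $k<p$ we have $\lfloor k/p\rfloor=0$, so the two bounds coincide and equality is forced (consistent with $\CC=\GRS$ being MDS in that case).

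I do not expect a genuine obstacle: both directions are short. The two points that need a little care are the exact dimension count $\dim\CC=k-\lfloor k/p\rfloor$ (needed so that the redundancy bound yields the stated constant rather than something weaker) and the use of $\alpha_i\neq0$, which is precisely what lets the GRS multiplier be cancelled in the lower-bound computation. If one prefers to avoid exhibiting $w$ explicitly, the lower bound also follows from $\CC\subseteq\GRS$ together with the standard fact that an $[n,k,d]_q$ MDS code has covering radius exactly $d-1$, but the displayed word keeps the argument self-contained.
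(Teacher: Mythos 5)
Your proof is correct. The upper bound is exactly the paper's argument: the redundancy bound $\rho(\CC)\le n-\dim\CC$ together with the dimension count $\dim\CC=k-\lfloor k/p\rfloor$ from \autoref{def:fpkq}/\eqref{eq:CPsize}, which you rightly flag as the one place where the constant must be tracked carefully. For the lower bound the paper takes the shorter route you mention only in passing: since $\CC\subseteq\GRS$, the minimum distance from any $v$ to $\CC$ is at least its distance to $\GRS$, so $\rho(\CC)\ge\rho(\GRS)=d-1$, citing the known covering radius of (G)RS codes with $n<q$. Your explicit deep word $w=(\alpha_1^{k+1},\dots,\alpha_n^{k+1})$ is essentially the standard proof of that cited fact, inlined: the agreement condition reduces (after cancelling the nonzero multiplier $\alpha_i$) to $\alpha_i$ being a root of the degree-$k$ polynomial $X^k-g(X)$, giving at most $k$ agreements and hence distance at least $n-k=d-1$ to every codeword. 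What your version buys is self-containedness --- no appeal to an external covering-radius result, and the argument visibly applies to the subcode $\CC$ directly rather than passing through monotonicity of the covering radius under taking subcodes; what the paper's version buys is brevity. Both are sound.
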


\begin{proof}
    Since $\CC \subseteq \GRS$, 
    \begin{align*}
        \rho(\CC) 
        \ge \rho(\GRS) 
        = d - 1. 
    \end{align*}
    The other half follows from the redundancy bound 
    (e.g. see~\cite[Corollary~11.1.3]{Huffman03}). 
\end{proof}

\autoref{thm:grs-rho} gives the following result on the covering radius of $\CCP$ codes. 

\begin{theorem}
    \label{thm:CPrho} 
    If $q$ is prime, then the covering radius of $\CCP$ 
    is $d - 1$. 
\end{theorem}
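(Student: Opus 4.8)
The plan is to transfer the covering-radius estimate for $\CC = \GRS(\fpkq')$ from \autoref{thm:grs-rho} over to $\CCP$, using the fact that when $q$ is prime the character $\chi$ identifies the two codes as block codes over equinumerous alphabets. Throughout, ``covering radius of $\CCP$'' means the Hamming covering radius of $\CCP$ regarded as a block code over the alphabet $\chi(\fq)$ (the $p$th roots of unity), consistent with viewing one-dimensional CP subspace codes as classical block codes as noted above.

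First I would observe that the hypothesis $q = p$ together with the standing assumption $k \le n = q - 1$ forces $k < p$, so $\lfloor k/p \rfloor = 0$. Feeding this into \autoref{thm:grs-rho} collapses its two-sided bound to $\rho(\CC) = d - 1$. (Equivalently, for $k < p$ one has $\CC = \GRS$ by the argument in the proof of \autoref{thm:GRSdmin}, and an $[n,k,d]_q$ MDS code has covering radius $n - k = d - 1$.) Next, since $q = p$ the non-trivial additive character $\chi$ of $\fp$ is injective, its image being the full group of $p$th roots of unity, which has size $p = |\fp|$. Hence the coordinatewise map $\Phi(c) := (\chi(c_1), \dots, \chi(c_n))$ is a bijection of $\fp^n$ onto $\chi(\fq)^n$ that preserves Hamming distance, since $c_i = c_i'$ iff $\chi(c_i) = \chi(c_i')$; and by \eqref{eq:cp-grs} it carries $\CC$ onto $\CCP$. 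A Hamming isometry of the ambient space taking $\CC$ to $\CCP$ preserves covering radius, so $\rho(\CCP) = \rho(\CC) = d - 1$, as claimed.

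The place that requires care is exactly where primality of $q$ is used. When $q > p$, the character $\chi$ is $(q/p)$-to-one, so $\Phi$ is not injective and can only shrink Hamming distances, breaking the isometry argument; moreover $\lfloor k/p \rfloor$ may be positive, so \autoref{thm:grs-rho} no longer pins down $\rho(\CC)$. Thus the argument genuinely needs $q = p$ on both counts, and the only real subtlety is to fix the correct ambient space $\chi(\fq)^n$ for the covering radius of $\CCP$ and to check that $\Phi$ is an isometry onto it.
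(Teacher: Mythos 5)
Your proposal is correct and follows essentially the same route as the paper's own proof: deduce $k < p$ from $q = p$ and $k \le n = q-1$, collapse the bound of \autoref{thm:grs-rho} to $\rho(\CC) = d-1$, and transfer this to $\CCP$ via the injectivity of $\chi$. The paper states the last step more tersely ("$\chi$ is one-to-one, so the covering radii coincide"), whereas you spell out the coordinatewise Hamming isometry; this is a welcome elaboration, not a difference in method.
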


\begin{proof}
    If $q$ is prime, then $q = p$ and 
    $k < p$. Hence, $\rho(\CC) = d - 1$ by \autoref{thm:grs-rho}. In this case $\chi$ is also one-to-one, so the covering radius of $\CCP$ equals $\rho(\CC)$. 
\end{proof}
}

\autoref{thm:CPdmin} implies that when $q$ is prime, $\RS$ decoders may be used as a major component in decoding CP codes 
without losing any error-correction capability of the code in the worst-case scenario. The idea, based on \eqref{eq:cp-grs}, is as follows. Given any received word, we first map each coordinate to its closest point in $\Psi := \chi(\fq)$ and apply $\chi^{-1}$. Finally, we invoke any $\RS$ decoder to recover the original message. This is explained below in more detail. 

Henceforth, we take $q = p$. 

\subsubsection{CP Decoding Algorithm}
\label{sec:CPdecoder}

Observe that 
\begin{align*}
    \CCP&(\fpkq, \chi) \\ 
    &= \{(\chi(\alpha_{1} c_{1}), \dots, \chi(\alpha_{n} c_{n})): c \in \RS(\fpkq')\} \\
    &\subseteq \{(\chi(\alpha_{1} c_{1}), \dots, \chi(\alpha_{n} c_{n})): c \in \RS(\CF(k - 1, q))\}.
\end{align*}
Since $\Psi$ is a finite cyclic subgroup of 
$\{z \in \BC: |z| = 1\}$ of order $q$, 
we can map each coordinate $y_{i} \in \BC$ of the received message $y$ to its closest point 
$e^{2 \pi i r / q}$ in $\Psi$, where $r$ is the closest integer to $q \arg(y_{i}) / (2 \pi)$.
More precisely, define $\phi: \BC \to \fq$ as 
\begin{align}
    \label{eq:chi-inverse}
    z &\mapsto \exp \left( \frac{2 \pi i}{q} \left \lfloor \frac{q \arg(z)}{2 \pi} + \frac{1}{2} \right \rfloor \right). 
\end{align}
This leads to \autoref{alg:cp} below.
\hide{
\begin{enumerate}
    \item For $i \in \{1, \dots, n\}$, compute $y_{i} := 
    \alpha_{i}^{-1} 
    \phi(m_{i}')$.
    \item \label{alg:wb:2} Apply any $\RS$ decoder to $y$. 
    \item \label{alg:wb:3} For any $f$ returned in Line~\ref{alg:wb:2}, return $X f(X)$. 
\end{enumerate}
\todo{Rewrite using algorithm environment}
Suppose that there are no errors. Then, given 
\begin{align*}
    \chi(f(\alpha_{i})) 
    &= \chi(f_{0} + f_{1} \alpha_{i} + \cdots + f_{k} \alpha_{i}^{k}) \\ 
    &= \chi(f_{0}) \chi(f_{1} \alpha_{i}) \cdots \chi(f_{k} \alpha_{i}^{k}) \\ 
    &= \prod_{j = 0}^{k} 
    \chi(\alpha_{i}^{j})^{f_{j}}.
\end{align*}
for $i = 1, \dots, n$, we have to find $f$. 
In other words, given the system of linear equations 
\begin{align*}
    \log \chi(f(\alpha_{i})) 
    &= \sum_{j = 0}^{k} 
    f_{j} \log \chi(\alpha_{i}^{j}) 
\end{align*}
for $i = 1, \dots, n$ 
over $\BC$, we need to solve for the unknowns $f_{j} \in \fq$. 
With errors, we can likewise solve the system 
\begin{align*}
    \log \phi(m_{i}') 
    &= \sum_{j = 0}^{k} 
    f_{j} \log \chi(\alpha_{i}^{j}). 
\end{align*}
for $i = 1, \dots, n$. 
This is a system of $n$ equations in $k - \lfloor k / p \rfloor$ unknowns over $\BC$, which can be expressed in matrix form as $y = A x$, where $y \in \BC^{n}$ and $A \in \BC^{k \times n}$ with
\begin{align*}
    y_{i} 
    &= \log \phi(m_{i}'), \\ 
    A_{i, j} 
    &= \log \chi(\alpha_{i}^{j}), \\ 
    x_{j} 
    &= f_{j}. 
\end{align*}
This leads to \autoref{alg:cp} below.
}

\begin{algorithm}[!htbp]
    \caption{$\CPdecode(m')$} 
    \label{alg:cp}
    \begin{algorithmic}[1]
        \Require{Received word $m' \in \BC^{n}$ with $<d/2$ errors}
        \Ensure{Unique $f$ 
        with $\dH(m',\CCP(f)) < d/2$}
        \BlackBox{$\RSdecode$: any $\GRS$ decoder}
        \For {$i = 1, \dots, n$}
        \State $y_{i} \gets \phi(m_{i}')$ 
        \EndFor
        \State $g \gets \RSdecode(y)$ 
        \State \Return $X g(X)$
    \end{algorithmic}
\end{algorithm}

\hide{
\usepackage{tikz}
\usetikzlibrary{shapes.geometric, arrows}

\tikzstyle{all} = [rectangle, rounded corners, 
minimum width=1cm, 
minimum height=1cm,
text centered, 
draw=black, 
align=center, 
fill=gray!30]

\tikzstyle{arrow} = [thick,->,>=stealth]

\begin{tikzpicture}[node distance=2cm]

\node (poly) [all] {Message Polynomial \\ $f \in \mathbb{F}_{q}[X]$};
\node (rs) [all, below of=poly] {Reed--Solomon Codeword \\ $(f(\alpha_{1}), \dots, f(\alpha_{n}))$};
\node (cp) [all, below of=rs] {CP Codeword \\ $(\chi(f(\alpha_{1})), \dots, \chi(f(\alpha_{n})))$};

\draw [arrow] (poly) -- node[anchor=east] 
{$\alpha_{1}, \dots, \alpha_{n} \in \mathbb{F}_{p}$} 
(rs);
\draw [arrow, align=center] (rs) -- node[anchor=east] 
{additive character \\ $\chi: \mathbb{F}_{p} \to \mathbb{C}$} 
(cp);

\end{tikzpicture}
Let $\alpha^{d} = 1$. 
Then 
\begin{align*}
    \chi(f(\alpha)) 
    &= \chi(f_{0}) \chi(f_{d}) \cdots \chi(f_{1} \alpha) \chi(f_{d + 1} \alpha) \cdots. 
\end{align*}
In general, for any $d \mid q - 1$, 
\begin{align*}
    \chi(f(\alpha^{(q - 1) / d})) 
    &= \chi(f_{0}) \chi(f_{1} \alpha) \chi(f_{2}) \cdots \chi(f_{k} \alpha^{k \bmod 2)}). 
\end{align*}

\begin{theorem}
    $\CPdecode$ is a maximum-likelihood decoder. 
\end{theorem}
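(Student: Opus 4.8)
\emph{Proof plan.} The statement only makes sense relative to a channel, so the first thing I would do is fix one. Working in the block-code picture \eqref{eq:cp-grs} with $q = p$, every transmitted coordinate lies in $\Psi = \chi(\fq)$, the group of $q$-th roots of unity; I would take the channel to be memoryless with a rotationally symmetric per-coordinate law $p(y \mid \psi)$ that depends only on the angle between $y$ and $\psi$ and is non-increasing in it, and whose restriction to $\Psi$ is the $q$-ary symmetric channel with crossover probability $\epsilon < 1 - 1/q$. Under this model there are two structural facts to prove. First, the coordinate map $\phi$ of \eqref{eq:chi-inverse} sends $y$ to the element of $\Psi$ nearest to it in angle, so by the monotonicity assumption $\phi(y) = \argmax_{\psi \in \Psi} p(y \mid \psi)$; the set of $y$ equidistant from two neighbouring roots of unity is null, and ties there are broken arbitrarily. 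Second, expanding $\log p(m' \mid \CCP(f)) = \sum_{i} \log p(m_i' \mid \chi(f(\alpha_i)))$ and using that, by the $q$-ary-symmetric structure, every coordinate with $\chi(f(\alpha_i)) \neq \phi(m_i')$ contributes the \emph{same} likelihood deficit relative to $\phi(m_i')$, the likelihood becomes a strictly decreasing function of $\dH(\phi(m'), \CCP(f))$ alone; hence the maximum-likelihood codeword is the one at minimum Hamming distance from $\phi(m')$.

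Next I would check that $\CPdecode$ returns exactly this codeword whenever the input is admissible. If $f^\star \in \fpkq$ was transmitted and $m'$ carries fewer than $d/2$ coordinate errors, i.e. $\dH(\phi(m'), \CCP(f^\star)) < d/2$, then since $\CCP$ is MDS for $q = p$, so $d_{\min}(\CCP) = d$ by \autoref{thm:CPdmin}, the codeword $\CCP(f^\star)$ is the unique codeword within Hamming distance $< d/2$ of $\phi(m')$, hence the unique Hamming-nearest codeword, hence by the previous paragraph the unique maximum-likelihood codeword. To see that the algorithm outputs $f^\star$: identifying $\Psi$ with $\fq$ through a fixed $\chi^{-1}$, the word $y = \phi(m')$ differs from $\GRS(f^\star/X) = \RS(f^\star)$ in fewer than $d/2$, i.e. at most $\lfloor (d-1)/2 \rfloor$, coordinates (using $n = q-1$, $v_i = \alpha_i$, and $f^\star_0 = 0$, which gives $f^\star/X \in \fpkq' \subseteq \CF(k-1,q)$); any $\GRS$ decoder therefore returns $g = f^\star/X$, and the final line outputs $X g(X) = f^\star$, whose encoding is the maximum-likelihood codeword.

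The real obstacle is the second structural fact: hard-decision-then-minimum-Hamming-distance decoding is maximum likelihood for a $q$-ary symmetric channel, but \emph{not} for a generic rotationally symmetric continuous noise model, where keeping soft information strictly helps. So the honest statement is that $\CPdecode$ is ML for the symbol-error channel above (and, since a bounded-distance $\GRS$ decoder succeeds only when a codeword lies within $\lfloor(d-1)/2\rfloor$ of $\phi(m')$, it never returns a non-ML codeword even outside the unique-decoding radius), and the work is in making the ``equal deficit per error'' claim precise, which is exactly where $\epsilon < 1 - 1/q$ enters. Two cosmetic points also need a line: formula \eqref{eq:chi-inverse} as written lands in $\Psi$ rather than $\fq$, so one composes it with $\chi^{-1}\colon \Psi \to \fq$; and $\phi$ is defined only off the null set just mentioned, which is harmless almost surely.
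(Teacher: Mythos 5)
Your proposal follows the same core route as the paper's (commented-out) proof: model the channel as memoryless and symmetric over the constellation $\Psi = \chi(\fq)$, so that the likelihood $\Pr[m' \mid \CCP(f)]$ is a decreasing function of Hamming distance alone, and conclude that maximum-likelihood decoding coincides with minimum-Hamming-distance decoding. The paper stops essentially there: it writes $\Pr[m'\mid c]$ as a product of per-coordinate probabilities, asserts a uniform per-coordinate error law, and obtains $\Pr[m'\mid c] = (1-\tfrac1p)^n (\tfrac{1}{p-1})^{\dH(m',c)}$ (its intermediate claim $\Pr[m_i'=c_i]=1/p$ is in fact inconsistent with that final formula). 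You go further on three points the paper glosses over, and all three are real: (i) you make the channel model explicit and isolate exactly where the symmetry assumption is needed --- the observation that hard-decision quantization via $\phi$ followed by minimum-Hamming decoding is \emph{not} ML for a generic rotationally symmetric continuous noise law is correct and is the honest caveat the theorem statement needs; (ii) you verify that the algorithm, which invokes a bounded-distance $\GRS$ decoder, actually \emph{realizes} the minimum-distance rule, which requires $\dH(\phi(m'),\CCP(f^\star)) < d/2$ and the fact that $d_{\min}(\CCP)=d$ for $q=p$ from \autoref{thm:CPdmin} --- the paper's proof never connects the ML criterion back to what $\CPdecode$ computes; (iii) your cosmetic fixes (that \eqref{eq:chi-inverse} lands in $\Psi$ rather than $\fq$, and the measure-zero tie set) are accurate. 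In short, same decomposition, but your version closes genuine gaps in the paper's argument rather than taking a different road.
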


\begin{proof}
    Using the notation of \autoref{alg:cp}, let $c = \CCP(f)$. We will show that maximizing $\Pr[m' | c]$ is equivalent to minimizing $\dH(m', c)$. 

    Since the error at each coordinate is independent,
    \begin{align}
        \label{eq:ml-prob}
        \Pr[m' | c] 
        &= \prod_{i = 1}^{n} \Pr[m'_{i} | c_{i}]. 
    \end{align}
    Since the errors are random, 
    \begin{align*}
        \Pr[m'_{i} = c_{i}] 
        &= \Pr[m'_{i} = \chi(f(\alpha_{i}))] 
        = \frac{1}{p}. 
    \end{align*}
    Therefore, \eqref{eq:ml-prob} gives
    \begin{align*}
        \Pr[m' | c] 
        &= \left(\frac{1}{p}\right)^{\dH(m', c)} \left(1 - \frac{1}{p}\right)^{n - \dH(m', c)} \\ 
        &= \left(1 - \frac{1}{p}\right)^{n} \left(\frac{1}{p - 1}\right)^{\dH(m', c)}. 
    \end{align*}
    Since $p$ is fixed, maximizing $\Pr[m' | c]$ is equivalent to minimizing $\dH(m', c)$. 
\end{proof}
}


\subsection{List Decoding}

\hide{Since the minimum distance decoder for $\CCP$ (\autoref{alg:cp}) in general is only as good as any $\RS$ decoder
, we next look at list decoding. 
Plainly the Guruswami--Sudan list decoder for $\GRS$ \cite{Guruswami06} works for $\CCP$, but we want to try and improve it for $\CCP$ by utilising the special structure of $\fpkq'$. 
(as 
there are now fewer unknowns, i.e. non-zero coefficients).
}

In list decoding, the decoder is relaxed by allowing to return a small list of codewords containing the original codeword, rather than a single codeword. 
This often provides better error-correction capabilities at the expense of sacrificing uniqueness. 
List decoding of RS codes has been extensively studied in the literature (see e.g.~\cite{Guruswami06}). 
In particular, the well-known Guruswami--Sudan (GS) algorithm \cite{Guruswami06}
for $\GRS(\CF(k - 1, q))$ is 
used as a key step in our CP list-decoding algorithm below.
Here, 
for any received vector $y \in \BC^{n}$ and an integer parameter $s > 0$ (called \emph{interpolation multiplicity}), $\GS(y, s)$ 
returns all $f \in \CF(k - 1, q)$ such that $\dH(y, \GRS(f)) \le n - \tau_{s}$, 
where $\tau_{s} := \lfloor c / s \rfloor + 1$ and $c := \lfloor \sqrt{(k - 1) n s (s + 1)} \rfloor$. 
There are two major steps in the GS algorithm: the \emph{interpolation} step and the \emph{factorization} step. 

\subsubsection{CP List-Decoding Algorithm}
\label{sec:CPlistdecoder}

As with unique decoding, one can use 
GS to list decode CP codes by following the template of \autoref{alg:cp}. 
However, the advantage for CP codes, as will be shown in \autoref{sec:prob}, is that one can expect a much smaller list size and error probability compared to a plain RS code of the same length and degree. The resulting algorithm is formally stated below as \autoref{alg:cplist}. 
\hide{
Let $m' \in \BC^{n}$ be the received message with possibly $\ge d / 2$ errors. Based on GS, we propose the following list-decoding algorithm for CP: 

\begin{enumerate}
    \item \label{alg:cp:y} For $i \in \{1, \dots, n\}$, compute $y_{i} := 
    \alpha_{i}^{-1} 
    \phi(m_{i}')$.
    \item \label{alg:cp:gs} Apply the Guruswami--Sudan decoder to $y$. 
    \item \label{alg:cp:cp} Return $\{X f(X) \in \CF_{p}(k, q): f \in L\}$, where $L$ is the list returned in Line~\ref{alg:cp:gs}. 
\end{enumerate}
\todo{Rewrite using algorithm environment}
}
\begin{algorithm}[!htbp]
    \caption{$\CPlistdecode(m', s)$} \label{alg:cplist}
    \begin{algorithmic}[1]
        \Require{Received word $m' \in \BC^{n}$ with possibly $\ge d / 2$ errors; interpolation multiplicity $s > 0$}
        \Ensure{All $f \in \fpkq$ with $\dH(m', \CCP(f)) \le n - \tau_{s}$}
        \For {$i = 1, \dots, n$}
        \label{alg:cp:y}
        \State $y_{i} \gets \alpha_{i}^{-1} \phi(m_{i}')$ 
        \EndFor
        \label{alg:cp:yend}
        \State $L \gets \GS(y, s)$ \label{alg:cp:gs} 
        \State \Return $\{X g(X) \in \CF_{p}(k, q): g \in L\}$ \label{alg:cp:cp}
    \end{algorithmic}
\end{algorithm}


\begin{theorem}
    \label{thm:gs-radius}
    $\CPlistdecode(m', s)$ produces all codewords within distance $n - \tau_{s}$ of $m'$. 
\end{theorem}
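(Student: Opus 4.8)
The plan is to reduce the assertion to the completeness guarantee of the Guruswami--Sudan decoder, via the identification of CP codewords with (generalized) Reed--Solomon codewords recorded in \eqref{eq:cp-grs}. First I would fix an arbitrary $f \in \fpkq$ with $\dH(m', \CCP(f)) \le n - \tau_{s}$ and write $f(X) = X g(X)$, where $g = f / X$ is a polynomial of degree at most $k - 1$ (it is a polynomial since $f_{0} = 0$) and in fact $g \in \fpkq'$. Thus $g$ lies in $\CF(k-1, q)$, the search space of $\GS(\cdot, s)$, and, recalling $\CC = \GRS(\fpkq')$, \eqref{eq:cp-grs} gives $\CCP(f)_{i} = \chi\!\left(\GRS(g)_{i}\right)$ for every $i$.

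The heart of the argument is that the preprocessing loop of \autoref{alg:cplist} is \emph{agreement-preserving}: on every coordinate $i$ with $m_{i}' = \CCP(f)_{i}$, the vector $y$ it produces satisfies $y_{i} = \GRS(g)_{i}$. Indeed, $\CCP(f)_{i}$ lies in the order-$q$ cyclic group $\Psi = \chi(\fq)$, and $\phi$ of \eqref{eq:chi-inverse}, which rounds the argument of its input to the nearest element of $\Psi$, acts as the identity on $\Psi$; hence $\phi(m_{i}') = \chi(\GRS(g)_{i})$, and inverting $\chi$ (together with the coordinatewise scaling in the loop) returns $y_{i} = \GRS(g)_{i}$. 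Consequently the set of coordinates on which $y$ differs from $\GRS(g)$ is contained in the set of coordinates on which $m'$ differs from $\CCP(f)$, so
\begin{align*}
    \dH\!\left(y, \GRS(g)\right) \;\le\; \dH\!\left(m', \CCP(f)\right) \;\le\; n - \tau_{s} .
\end{align*}

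By the Guruswami--Sudan guarantee recalled just before \autoref{alg:cplist} --- $\GS(y, s)$ returns \emph{every} $g' \in \CF(k-1, q)$ with $\dH(y, \GRS(g')) \le n - \tau_{s}$ --- we conclude $g \in L$. Since $g \in \fpkq'$, we have $X g(X) = f \in \fpkq = \CF_{p}(k, q)$, so $f$ passes the membership test in the final line of \autoref{alg:cplist} and is returned; as $\CCP(\cdot)$ is injective on $\fpkq$ by \eqref{eq:CPsize}, producing every such $f$ is the same as producing every codeword of $\CCP$ within distance $n - \tau_{s}$ of $m'$, which is the assertion.

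The step I expect to demand the most care is the agreement-preservation claim, and in particular making the coordinatewise identities line up exactly: one must be careful about the identification $\Psi \cong \fq$ that is implicit in the definition of $\phi$ and about which GRS multipliers $\GS$ is invoked with, so that a noiseless coordinate of $m'$ really does map to a noiseless coordinate of $y$ relative to $\GRS(g)$. Everything downstream of that is just monotonicity of the Hamming distance under a coordinatewise map together with the already-cited completeness of $\GS$.
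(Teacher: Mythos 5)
Your proposal is correct and follows the same route as the paper, which simply cites the Guruswami--Sudan completeness guarantee (\cite[Theorem~4.8]{Guruswami06}); you have just filled in the reduction (the substitution $f(X)=Xg(X)$, the agreement-preserving property of the $\phi$-and-scaling preprocessing, and the final membership filter) that the paper leaves implicit. The one detail to tidy is the caveat you already flag yourself: with $v_i=\alpha_i$ the loop produces $y_i=g(\alpha_i)=\RS(g)_i$ rather than $\GRS(g)_i$, but since coordinatewise multiplication by the nonzero $\alpha_i$ preserves the agreement set, the Hamming-distance bound and hence the conclusion are unaffected.
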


\begin{proof}
    This is a direct consequence of \cite[Theorem~4.8]{Guruswami06}.
\end{proof}

\subsubsection{Complexity Analysis}

(Lines~\ref{alg:cp:y}-\ref{alg:cp:yend}) By pre-computing $\chi(\fq)$ in a hash table (using $O(n)$ space), 
we can compute $\phi(m_{j}')$ using \eqref{eq:chi-inverse} for each $j$ in $O(1)$ time. 
Therefore, we can compute $y$ in $O(n)$ time, or $O(1)$ latency with parallelization. 

(Line~\ref{alg:cp:gs}) Solving the interpolation problem 
in GS takes $O(n^{3})$ time using the Feng--Tzeng algorithm \cite[\S~VIII]{McEliece03} or naive Gaussian elimination. This was improved by K\"{o}tter to $O(s^{4} n^{2})$ \cite[\S~VII]{McEliece03}, which is the most efficient known solution~
\cite[\S~4]{McEliece03-1}. 

The most efficient known solution to the factorization problem 
of GS is due to Gao and Shokrollahi \cite{Shokrollahi00} with a time complexity of $O(\ell^{3} k^{2})$, 
although the Roth--Ruckenstein algorithm \cite[\S~V]{Roth00} is quite competetive \cite[\S~4]{McEliece03-1} 
(see also \cite[\S~IX]{McEliece03}). 
Here $\ell$ is a design parameter, typically a small constant \cite{Roth00}, which is an upper bound on the size of the list of decoded codewords, which is bounded above by the degree of the interpolation polynomial in the second variable.
~\cite{Guruswami06}. 

In general, $\ell \le \lfloor c / (k - 1) \rfloor
= O(s \sqrt{n / k})$ (see also \cite[Theorem~4.8]{Guruswami06}), which gives $O(\ell^{3} k^{2}) = O(s^{3} n \sqrt{k n})$, yielding a $O(s^{4} n^{2})$ time complexity for Line~\ref{alg:cp:gs} 
of the algorithm.

\hide{
\question{What value should we choose for $s$?}

\note{Keep $s$ as a parameter.} 
} 

\section{Probabilistic Analysis of List Decoding}
\label{sec:prob}

In this section, we analyze the decoder error probability and the average list size (averaged over all error patterns of a given weight) of \autoref{alg:cplist}. 

\hide{
\question{What is the probability that $\GS$ returns a list of size $> 1$?} 

\answer{McEliece~\cite[Appendix~D]{McEliece03} showed that $\GS$ almost always returns a list of size $1$ for $\RS(\CF(k - 1, q))$.} 

\note{Do a similar analysis for $\RS(\fpkq)$ to possibly get an improved result for CP (maybe run simulations).} 
}

Let $\CC \subseteq \fq^{n}$ be a linear code. 
Following McEliece \cite[Appendix~D]{McEliece03}, for an arbitrary vector $u \in \fq^{n}$, define 
\begin{align*}
    B_{\CC}(u, t) 
    &:= \{c \in \CC \setminus \{0\}: d(c, u) \le t\}, \\ 
    D_{\CC}(w, t) 
    &:= \sum_{\wt(u) = w} |B_{\CC}(u, t)|,
\end{align*}
i.e., if the all-zero vector is the transmitted codeword and $u$ is received, then $B_{\CC}(u, t)$ is the set of non-zero codewords at distance $\le t$ (the \emph{decoding radius}) from $u$. 
By linearity, if an arbitrary codeword $c$ is the transmitted and $u$ is the error pattern, then $|B_{\CC}(u, t)|$ is the number of codewords $\neq c$ (the \emph{non-casual codewords}) at distance $\le t$ from the received word $c + u$. 
If $|B_{\CC}(u, t)| = s$, we say $u$ is \emph{$s$-tuply falsely decodable}. 
Then, $D_{\CC}(w, t)$ is the total number of falsely decodable words $u$ of weight $w$, where an $s$-tuply falsely decodable word is counted $s$ times. 

Recall that a bounded distance decoder with decoding radius $t$ returns all codewords within distance $t$ of the received word. 

\begin{theorem}
    \label{thm:CPlist}
    Let $\CC \subseteq \fq^{n}$ be a linear code with minimum distance $d$. 
    Consider a bounded distance decoder for $\CC$ with decoding radius $t$. If the error vector has weight $w$, then the average number of non-casual codewords (over all error patterns of weight $w$) in the decoding sphere is given by
    \begin{align*}
        \ol_{\CC}(w, t)  
        &:= \frac{D_{\CC}(w, t)}{\binom{n}{w} (q - 1)^{w}}.
    \end{align*}
    Moreover, the probability that there exists at least one non-casual codeword within distance $t$ of the received word 
    is 
    \begin{align*}
        P_{\CC}(w, t) 
        &\le \ol_{\CC}(w, t) 
    \end{align*}
    for all $w, t$, with equality if $t < d / 2$. 
\end{theorem}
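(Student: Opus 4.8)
The plan is to read both quantities as statistics of the single nonnegative integer-valued random variable $N(u) := |B_{\CC}(u,t)|$, where the error pattern $u$ is drawn uniformly from the $\binom{n}{w}(q-1)^{w}$ vectors of weight $w$ in $\fq^{n}$. First I would unwind the definitions: by linearity of $\CC$ we may assume the all-zero codeword was transmitted, so the received word is exactly $u$, and the non-casual (i.e.\ nonzero, after this normalization) codewords within Hamming distance $t$ of it form precisely the set $B_{\CC}(u,t)$. Summing $N(u)$ over all weight-$w$ patterns gives $D_{\CC}(w,t)$, and dividing by the number of such patterns shows $\ol_{\CC}(w,t) = \BE_{u}[N(u)]$, which is exactly the asserted average number of non-casual codewords in the decoding sphere.

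Next, for the bound $P_{\CC}(w,t) \le \ol_{\CC}(w,t)$ I would note that $P_{\CC}(w,t) = \Pr_{u}[N(u) \ge 1]$ and apply Markov's inequality — or, more elementarily, the pointwise bound $\mathbf{1}[N(u) \ge 1] \le N(u)$ valid for nonnegative integers — and take expectations, obtaining $P_{\CC}(w,t) = \Pr_{u}[N(u) \ge 1] \le \BE_{u}[N(u)] = \ol_{\CC}(w,t)$. This holds for every choice of $w$ and $t$.

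For the equality statement when $t < d/2$, I would invoke the standard sphere-packing argument: if two distinct codewords $c_{1}, c_{2}$ both lay within Hamming distance $t$ of some vector $u$, then $d(c_{1},c_{2}) \le d(c_{1},u) + d(u,c_{2}) \le 2t < d$, contradicting that the minimum distance is $d$. Hence the closed ball of radius $t$ around any $u$ contains at most one codeword, so $N(u) \le 1$ for all $u$; a $\{0,1\}$-valued random variable satisfies $\BE_{u}[N(u)] = \Pr_{u}[N(u) = 1] = \Pr_{u}[N(u) \ge 1] = P_{\CC}(w,t)$, so the inequality is tight, which finishes the proof.

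Since the argument is essentially Markov's inequality together with the elementary disjointness of decoding spheres — mirroring McEliece's analysis of Reed--Solomon codes in~\cite[Appendix~D]{McEliece03} — I do not anticipate a genuine obstacle. The only point requiring care is the bookkeeping around the exclusion of the zero (casual) codeword in the definition of $B_{\CC}$: after normalizing the transmitted codeword to $0$, ``non-casual codeword within distance $t$ of the received word'' and ``element of $B_{\CC}(u,t)$'' coincide, so no correction term appears, and the bound $N(u) \le 1$ needed for the equality case follows a fortiori from ``at most one codeword in the ball''.
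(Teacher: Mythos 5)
Your proof is correct and is exactly the standard first-moment argument (expectation of $N(u)=|B_{\CC}(u,t)|$ over weight-$w$ error patterns, Markov's inequality, and disjointness of radius-$t$ balls when $t<d/2$) that underlies the result. The paper itself does not spell this out --- it simply cites \cite[Theorem~D-1]{McEliece03} and notes the argument extends verbatim from RS codes to arbitrary linear codes --- so your write-up supplies the same proof the paper implicitly relies on, just in self-contained form.
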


\begin{proof}
    This is essentially \cite[Theorem~D-1]{McEliece03} (see also \cite[Theorem~5.1]{McEliece03-1}, \cite{McEliece86} and \cite{Cheung89}), whose proof, although given for $[n, k]$ $\RS$ codes only, applies more generally to $\CC$ as above. 
\end{proof}

\hide{
Then \autoref{thm:CPlist} tells us that, to compute the average number of non-casual codewords of $\CC$ within distance $t$ of the received word, it is enough to know the numbers $D_{\CC}(w, t)$. 

\todo{Find good estimates/upper bounds for $D_{\CC}(w, t)$.}
}

\subsection{Bounds on 
Average List Size and 
Error Probability}

Using the RS subcode view of $\CCP$ mentioned at the beginning of \autoref{sec:decoding}, 
any bounds obtained for $\RS$ using \autoref{thm:CPlist} automatically apply to $\CCP$ (e.g. see \eqref{eq:bound1} below). 
However, any $f \in \fpkq$ has 
$k + 1$ coefficients, $1 + \lfloor k / p \rfloor$ of which are guaranteed to be zero. 
Therefore, 
we can expect a factor of $q^{1 + \lfloor k / p \rfloor}$ improvement on the list size of $\CCP$ over $\RS$ on average, as discussed below. 


Let $\CC = \GRS(\fpkq')$ and $\CC' = \GRS(\CF(k - 1, q))$. 
Then $\CPlistdecode(m', s)$ and 
$\GS(m', s)$ are 
bounded distance decoders
for $\CC$ and $\CC'$, respectively,
with decoding radius 
\begin{align*}
    t_{s} 
    &:= n - \tau_{s} \\ 
    &= n - 1 - \left \lfloor \frac{\lfloor \sqrt{(k - 1) n (s + 1)} \rfloor}{s} \right \rfloor \\ 
    &= n - 1 - \left \lfloor \sqrt{(k - 1) n \left(1 + \frac{1}{s}\right)} \right \rfloor,
\end{align*}
where (and henceforth) $n = q - 1$. Note that $t_{1}, t_{2}, \dots$ is a bounded non-decreasing sequence of positive integers. Hence,
there exists a positive integer $s_{0}$ such that 
\begin{align*}
    t_{s_{0}} = t_{s_{0} + 1} = \cdots = t_{\infty} := n - 1 - \lfloor \sqrt{(k - 1) n} \rfloor.
\end{align*}
Since $\CC \subseteq \CC'$, we have $B_{\CC}(u, t) \subseteq B_{\CC'}(u, t)$. Consequently, $D_{\CC}(w, t) \le D_{\CC'}(w, t)$ and $\ol_{\CC}(w, t) \le \ol_{\CC'}(w, t)$. 
Moreover, since $|B_{\CC}(w, t)|$ is non-decreasing in $t$, so are $D_{\CC}(w, t)$ and $\ol_{\CC}(w, t)$. Hence, by \autoref{thm:CPlist}, 
\begin{align}
    \label{eq:bound1}
    P_{\CC}(w, t_{s}) 
    &\le \ol_{\CC}(w, t_{s}) 
    \le \ol_{\CC'}(w, t_{s}),
\end{align}
for all $s \ge 1$.

\hide{
Observe that the average size of 
$B_{\CC}(u, t)$ over all possible received vectors $u \in \fq^{n}$ 
is simply equal to $|\CC| / |\CC'|$ times the average size of $B_{\CC'}(u, t)$, i.e. 
\begin{align*}
    \overline{|B_{\CC}(u, t)|} 
    &= \overline{|B_{\CC'}(u, t)|} \cdot \frac{|\CC'|}{|\CC|} 
    = \overline{|B_{\CC'}(u, t)|} \cdot q^{- \lfloor k / p \rfloor},
\end{align*} 
where the last equality follows from \eqref{eq:CPsize}.
Hence, on average, 
\begin{align*}
    D_{\CC}(w, t) 
    \le D_{\CC'}(w, t) \cdot q^{- \lfloor k / p \rfloor}. 
\end{align*}
}

\subsection{Average List Size Improvement}

Assuming that the coefficients of the polynomials 
corresponding to the codewords 
in the decoding sphere of $\GS(m', s)$ are random, only a $(1 / q)^{1 + \lfloor k / p \rfloor}$ fraction of them belong to $\fpkq$ on average. Hence, 
\begin{align*}
    \ol_{\CC}(w, t_{s}) 
    &= q^{-1 - \lfloor k / p \rfloor} 
    \ol_{\CC'}(w, t_{s}). 
\end{align*}
Then, by \autoref{thm:CPlist}, 
\begin{align}
    \label{eq:bound2}
    P_{\CC}(w, t) 
    &\le 
    q^{-1 - \lfloor k / p \rfloor} \ol_{\CC'}(w, t).
\end{align}
McEliece and Swanson~\cite{McEliece86} proved that for all $w$ and $t$, 
\begin{align*}
    \ol_{\CC'}(w, t) 
    &\le 
    \frac{1}{(q - 1)^{n - k}} 
    \sum_{i = d - w}^{t} 
    \binom{n}{i} (q - 1)^{i}.
\end{align*}
Writing $q - 1 = \theta / (1 - \theta)$ for $\theta = 1 - 1 / q \in (0, 1)$ gives 
\begin{align*}
    \ol_{\CC'}(w, t) 
    &\le 
    \frac{(1 - \theta)^{n - k}}{\theta^{n - k}} 
    \sum_{i = d - w}^{t} 
    \binom{n}{i} \frac{\theta^{i}}{(1 - \theta)^{i}} \\ 
    &= \frac{(1 - \theta)^{- k}}{\theta^{n - k}} 
    \sum_{i = d - w}^{t} 
    \binom{n}{i} \theta^{i} (1 - \theta)^{n - i} \\ 
    &= \frac{(1 - \theta)^{- k}}{\theta^{n - k}} 
    [F(t; n, \theta) - F(d - w - 1; n, \theta)] \\ 
    &\le \frac{(1 - \theta)^{- k}}{\theta^{n - k}} 
    F(t; n, \theta), 
\end{align*}
where 
$F$ is the cumulative distribution function of the binomial distribution. 
Hence, by \eqref{eq:bound2}, 
\begin{align}
    q^{1 + \lfloor k / p \rfloor} \ol_{\CC}(w, t_{s}) 
    &\le \frac{(1 - \theta)^{- k}}{\theta^{n - k}} 
    F(t_{s}; n, \theta) 
    \label{eq:mceliece1} \\ 
    &\le \frac{(1 - \theta)^{- k}}{\theta^{n - k}} 
    \exp\left( - n D\left(\frac{t_{s}}{n} \bigg\| \theta\right)\right) 
    \label{eq:chernoff} \\
    &= \frac{n^{k + t_{s}}}
    {t_{s}^{t_{s}} (n - t_{s})^{n - t_{s}}} 
    \le \frac{2^{n}}{n^{n - k - t_{s}}} 
    \label{eq:jensen}
\end{align}
for all $s \ge 1$. 
Here, 
\eqref{eq:chernoff} is by Chernoff's bound, and \eqref{eq:jensen} is by Jensen's inequality. Therefore, 
\begin{multline*}
    \ol_{\CC}(w, t_{\infty}) 
    \le \frac{2^{n}}{n^{\lfloor \sqrt{(k - 1) n} \rfloor - (k - 1)} (n + 1)^{1 + \lfloor k / p \rfloor}} 
    \\ 
    = \exp\left(
        n \log 2 - 
        \left(
            \lfloor \sqrt{(k - 1) n} \rfloor - (k - 1)
        \right) 
        \log n 
    \right. \\ 
    \left. 
        \vphantom{\left(\lfloor \sqrt{(k - 1) n} \rfloor\right)} 
        - \left(1 + \lfloor k / p \rfloor\right) \log(n + 1)
    \right).
\end{multline*}
Writing $k = R n$ and $q = p = n + 1$, the argument of the exponential is asymptotically 
\begin{align*}
    n 
    \left(
        \log 2 - (\sqrt{R} - R) \log n - \frac{(1 + R)}{n} 
        \log(n + 1)
    \right) 
    , 
\end{align*}
which tends to $- \infty$ as $n \to \infty$ for any fixed rate $R \in (0, 1)$. 
Therefore, by \eqref{eq:bound2}, the output of $\CPlistdecode$ is $q^{1 + \lfloor k / p \rfloor}$ times less likely to contain more than one codeword than that of $\GS$ for a GRS code of the same length and degree. 
\hide{
Writing $k = R n$ for $0 < R < 1$, the argument of the exponential is approximately 
\begin{align*}
& n \log 2 - \left(
        (\sqrt{R} n - R n) \log n - \frac{R n}{p} \log(1 + n)
    \right) \\ 
    &= n 
    \left(
        \log 2 - (\sqrt{R} - R) \log n + \frac{R}{p} \log(1 + n)
    \right) 
\end{align*}
which tends to $- \infty$ as $n \to \infty$ for any given $R \in (0, 1)$. By \eqref{eq:avg_L}, this implies that the list size is indeed one in most cases. 

\subsubsection*{Example}

Consider $\CC = \RS(\CF_{2}(15, 32))$ with $s = 3$.  
Then 
\begin{align*}
    \left\lfloor \frac{d}{2} \right\rfloor = 8, 
    \quad t_{s} = 9, 
    \quad t_{\infty} = 10
\end{align*}
so that 
\begin{align*}
    32^{15 / 2} P_{\CC}(w, t_{s}) 
    &\le 32^{15 / 2} \ol_{\CC}(w, t_{s}) 
    \le \frac{2^{31}}{31^{31 - 15 - 9}}
    \approx 0.078 
\end{align*}
and 
\begin{align*}
    32^{15 / 2} P_{\CC}(w, t_{\infty}) 
    &\le 32^{15 / 2} \ol_{\CC}(w, t_{\infty}) 
    \le \frac{2^{31}}{31^{31 - 15 - 10}}
    \approx 2.42, 
\end{align*}
which shows that the output of $\CPlistdecode$ almost always contains a single codeword, while the output of $\GS$ for 
$\CC'$ may contain multiple codewords.
(Note that we have only considered bounds derived from \eqref{eq:jensen} here for illustration purposes, while \eqref{eq:mceliece1} and \eqref{eq:chernoff} do give tighter bounds.) 
}

\subsubsection*{Simulation Results}

For each prime $q \in (5, 50)$, $5$ randomly chosen $f \in \fpkq'$ for $k = \lfloor q / 2 \rfloor - 1$ were encoded in $\CC$ and $\CC'$, and 
transmitted $200$ times with 
$t_{\infty}$ random errors
. 
The average list sizes of the corresponding list decoders are shown in \autoref{tab:list_size}. 
\hide{
A case where CP list size is 2 and RS list size is 4:
(18, 10, 9) CP code with character chi_1 associated with [18, 10, 9] Generalized Reed-Solomon Code over GF(19)
RS polynomial: 11*x^9 + 6*x^8 + 10*x^7 + 7*x^6 + 12*x^5 + 14*x^4 + 2*x^3 + 8*x^2 + 5*x + 9
CP polynomial: 11*x^9 + 6*x^8 + 10*x^7 + 7*x^6 + 12*x^5 + 14*x^4 + 2*x^3 + 8*x^2 + 5*x
CP codeword: (0.945817241 - 0.324699470*I, -0.0825793455 + 0.996584493*I, -0.986361303 - 0.164594590*I, -0.0825793455 + 0.996584493*I, 0.945817242 + 0.324699469*I, 1.00000000, 0.245485487 - 0.969400266*I, 0.546948158 - 0.837166478*I, 0.546948158 - 0.837166478*I, 0.945817242 + 0.324699469*I, -0.879473751 + 0.475947393*I, -0.986361303 + 0.164594591*I, -0.879473751 + 0.475947393*I, -0.0825793458 - 0.996584493*I, -0.0825793458 - 0.996584493*I, -0.677281572 + 0.735723911*I, -0.986361303 - 0.164594590*I, -0.0825793458 - 0.996584493*I)
RS codeword: (18, 10, 11, 1, 5, 0, 10, 14, 11, 10, 12, 13, 9, 6, 1, 17, 18, 5)
CP received word: (0.945817241 - 0.324699470*I, -0.0825793455 + 0.996584493*I, -0.986361303 - 0.164594590*I, -0.0825793455 + 0.996584493*I, 0.241483579 - 0.843659280*I, 1.00000000, 0.235460827 - 0.259279807*I, 0.546948158 - 0.837166478*I, 0.546948158 - 0.837166478*I, 0.0982023708 + 0.284480046*I, -0.951522083 + 0.723812145*I, -0.986361303 + 0.164594591*I, -0.879473751 + 0.475947393*I, -0.630942105 + 0.229026079*I, -0.0825793458 - 0.996584493*I, -0.677281572 + 0.735723911*I, -0.986361303 - 0.164594590*I, -0.0825793458 - 0.996584493*I)
RS received word: (18, 10, 11, 1, 13, 0, 10, 0, 11, 10, 6, 13, 5, 6, 6, 17, 18, 5)
CP list: [11*x^9 + 6*x^8 + 10*x^7 + 7*x^6 + 12*x^5 + 14*x^4 + 2*x^3 + 8*x^2 + 5*x, 9*x^9 + 3*x^7 + 9*x^6 + 18*x^5 + 7*x^4 + 14*x^3 + 8*x^2 + 4*x]
RS list: [11*x^9 + 6*x^8 + 10*x^7 + 7*x^6 + 12*x^5 + 14*x^4 + 2*x^3 + 8*x^2 + 5*x, 5*x^9 + 3*x^8 + 7*x^7 + x^6 + 8*x^5 + 13*x^4 + 10*x^3 + 3*x^2 + 10*x + 15, 4*x^9 + 15*x^8 + 11*x^7 + 10*x^6 + 9*x^5 + 12*x^4 + 2*x^3 + 7*x^2 + 14*x + 10, 16*x^8 + 15*x^7 + 10*x^6 + 2*x^5 + 3*x^4 + 2*x^3 + 18*x^2 + x + 6]
\begin{table}[!htbp]
    \caption{Monte Carlo Simulated Average List Sizes
    } 
    \centering
    \begin{tabular}{|c|c|c|}
        \hline 
        $(q, k)$ & $\CC$ & $\CC'$ \\ 
        \hline 
        $(8, 2)$ & $1.084$ & $3.344$ \\
        $(8, 3)$ & $1.081$ & $3.567$ \\ 
        $(9, 4)$ & $1.518$ & $5.092$ \\ 
        $(16, 3)$ & $1.03$ & $2.164$ \\ 
        $(16, 4)$ & $1$ & $2.326$ \\ 
        $(16, 5)$ & --- & --- \\ 
        \hline 
    \end{tabular}
    \label{tab:list_size}
\end{table}
\begin{table}[!htbp]
    \caption{Monte Carlo Simulated Average List Sizes
    } 
    \centering
    \begin{tabular}{|c|c|c|}
        \hline 
        $q$ & $\CC$ & $\CC'$ \\ 
        \hline 
        $7$ & $1$ & $1$ \\
        $11$ & $1$ & $1$ \\ 
        $13$ & $1$ & $1.598$ \\ 
        $17$ & $1$ & $1.136$ \\ 
        $19$ & $1.009$ & $1.725$ \\ 
        $23$ & $1.002$ & $1.123$ \\ 
        $29$ & $1.001$ & $1.089$ \\ 
        $31$ & $1$ & $1.003$ \\ 
        $37$ & $1$ & $1.001$ \\ 
        $41$ & $1$ & $1$ \\ 
        $43$ & $1$ & $1$ \\ 
        $47$ & $1$ & $1$ \\ 
        \hline 
    \end{tabular}
    \label{tab:list_size}
\end{table}
}
\begin{table}[!htbp]
    \caption{Monte Carlo Simulated Average List Sizes
    } 
    \centering
    \begin{tabular}{|c|c|c||c|c|c|}
        \hline 
        $q$ & $\CC$ & $\CC'$ & 
        $q$ & $\CC$ & $\CC'$ \\ 
        \hline 
        $7$ & $1$ & $1$ & 
        $29$ & $1.001$ & $1.072$ \\ 
        $11$ & $1$ & $1$ & 
        $31$ & $1$ & $1.002$ \\ 
        $13$ & $1$ & $1.585$ & 
        $37$ & $1$ & $1.004$ \\ 
        $17$ & $1$ & $1.121$ & 
        $41$ & $1$ & $1$ \\ 
        $19$ & $1.005$ & $1.711$ & 
        $43$ & $1$ & $1$ \\ 
        $23$ & $1.004$ & $1.115$ & 
        $47$ & $1$ & $1$ \\ 
        \hline 
    \end{tabular}
    \label{tab:list_size}
\end{table}

\section{Conclusion and Future Directions}
\label{sec:five}

In this work, we studied and analyzed efficient minimum-distance and list decoding for one-dimensional CP subspace codes over prime fields. 
Possible directions for future work include adapting our decoding algorithms in \autoref{sec:decoding} to CP codes over arbitrary finite fields and to higher-dimensional CP codes~\cite{Hessam21}. Note that the problem becomes inherently more challenging in the latter case as higher-dimensional CP codes cannot be directly mapped to block codes. Hence, studying this problem may require developing entirely new techniques. 


Note also that we only considered hard-decision decoding as the first step in decoding CP codes. Thus, another direction is to look at soft-decision decoding (see, e.g.~\cite{Koetter03}) of CP codes, when soft information is available from the mapping of the message coordinates to the finite field elements (i.e. via \eqref{eq:chi-inverse}). 

\bibliographystyle{IEEEtran}
\bibliography{IEEEabrv, references}

\clearpage

\hide{
\newpage

\appendix

\subsection{Guruswami--Sudan Algorithm for \texorpdfstring{$\GRS(\CF(k - 1, q))$}{GRS(F(k - 1, q))}}
\label{sec:gs}

Let $m \in \BC^{n}$ be the original message and $m' = m + e$ the received message. Then there is a unique $f \in \CF(k - 1, q)$ with $m_{i} = v_{i} f(\alpha_{i})$ for $i = 1, \dots, n$. Let $S = \{(\alpha_{i}, y_{i})\}_{i = 1}^{n}$, where $y_{i} = m_{i}' / v_{i}$. Note that if $m' = m$, then $y_{i} = f(\alpha_{i})$ for all $i$, i.e. all points of $S$ lie on the curve $Q(X, Y) = Y - f(X)$. 
The idea then is to find a polynomial $Q(X, Y) = \sum_{i, j} q_{i, j} X^{i} Y^{j} \in \fqxy$ that has a root at each point of $S$ with a certain multiplicity $s$ (a parameter called \emph{interpolation multiplicity}), and then find the factors of $Q(X, Y)$ of the form $Y - f(X)$. To formally state the algorithm, we require the following definitions.

\begin{definition}
    \label{def:wdeg}
    The \emph{$(\alpha, \beta)$-(weighted) degree} of $Q \in \fqxy$ is 
    $\deg_{\alpha, \beta}(Q) 
    := \max\{\alpha i + \beta j: q_{i, j} \neq 0\}$. 
\end{definition}

\hide{
\begin{definition}
    $N_{k}(c)$ is the number of monomials in $\fqxy$ of $(1, k)$-degree $\le c$. 
\end{definition}
}

\begin{definition}
    \label{def:multiplicity}
    The \emph{multiplicity} of a root $(\alpha, \beta)$ of $Q \in \fqxy$ is the smallest degree of a monomial in $Q(X + \alpha, Y + \beta)$. 
\end{definition}

Then the Guruswami--Sudan algorithm can be stated as follows. 

\hide{
\subsubsection*{Guruswami--Sudan Algorithm $\GS(s)$}

\begin{enumerate}
    \item {\bf Interpolation:} Let $c = \lfloor \sqrt{(k - 1) n s (s + 1)} \rfloor$ and set $\tau_{s} := \lfloor c / s \rfloor + 1$. Construct a non-zero polynomial $Q \in \fqxy$ with $\deg_{1, k - 1}(Q) \le c$ (see \autoref{def:wdeg}) such that each element of $S$ is a root of $Q$ of multiplicity at least $s$ (see \autoref{def:multiplicity}). 
    \label{alg:gs:interpolation}
    \item {\bf Factorization:} Find all factors of $Q(X, Y)$ of the form $Y - f(X)$ such that $\deg(f) \le k - 1$ and $f(\alpha_{i}) = y_{i}$ for at least $\tau_{s}$ of the $i$'s. Return the list of all such $f$. 
    \label{alg:gs:factorization}
\end{enumerate}
}


\begin{algorithm}[!htbp]
    \caption{$\GS(m', s)$}
    \label{alg:gs}
    \begin{algorithmic}[1]
        \Require{Received word $m'\in \BC^{n}$ with possibly $\ge d / 2$ errors; interpolation multiplicity $s \in \BZ_{+}$}
        \Ensure{All $f \in \CF(k - 1, q)$ with $\dH(m', \GRS(f)) \le n - \tau_{s}$}
        \State (Interpolation) Construct a non-zero polynomial $Q \in \fqxy$ with $\deg_{1, k - 1}(Q) \le c$ 
        such that each element of $S$ is a root of $Q$ of multiplicity 
        at least $s$.
        \label{alg:gs:interpolation}
        \State (Factorization) Find all factors of $Q(X, Y)$ of the form $Y - f(X)$ such that $\deg(f) \le k - 1$ and $f(\alpha_{i}) = y_{i}$ for at least $\tau_{s}$ of the $i$'s.
        \label{alg:gs:factorization}
        \State \Return All $f$ given by Line~\ref{alg:gs:factorization}.
    \end{algorithmic}
\end{algorithm}

\subsection{Derivation of \eqref{eq:mceliece1} from \eqref{eq:mceliece}}
\label{sec:derivation}

Write $q - 1 = \theta / (1 - \theta)$ for $\theta = 1 - 1 / q \in (0, 1)$. Then 
\begin{align*}
    \ol_{\CC'}(w, t) 
    &\le 
    \frac{1}{(q - 1)^{n - k}} 
    \sum_{i = d - w}^{t} 
    \binom{n}{i} (q - 1)^{i} \\ 
    &= \frac{(1 - \theta)^{n - k}}{\theta^{n - k}} 
    \sum_{i = d - w}^{t} 
    \binom{n}{i} \frac{\theta^{i}}{(1 - \theta)^{i}} \\ 
    &= \frac{(1 - \theta)^{- k}}{\theta^{n - k}} 
    \sum_{i = d - w}^{t} 
    \binom{n}{i} \theta^{i} (1 - \theta)^{n - i} \\ 
    &= \frac{(1 - \theta)^{- k}}{\theta^{n - k}} 
    [F(t; n, \theta) - F(d - w - 1; n, \theta)] \\ 
    &\le \frac{(1 - \theta)^{- k}}{\theta^{n - k}} 
    F(t; n, \theta) 
\end{align*}
Taking $t = t_{s}$ gives \eqref{eq:mceliece1}. 
}

\end{document}